\newcommand{\upcite}[1]{$^{\mbox{\scriptsize \cite{#1}}}$}
\theoremstyle{definition}
\newtheorem{theo}{Theorem}[section]
\newtheorem{lem}{Lemma}[section]
\newtheorem{remark}{Remark}[section]
\begin{document}
	\title{An integrable semi-discretization of the modified Camassa-Holm equation with linear dispersion term}
	\author{Han-Han Sheng$^\dag$, Guo-Fu Yu$^\dag$ and Bao-Feng Feng$^\ddag$ \footnote{Corresponding author. Email address: baofeng.feng@utrgv.edu}\\
		$^\dag$ School of Mathematical Sciences, Shanghai Jiao Tong University, \\
		Shanghai 200240, P.R.\ China \\
	$^\ddag$ School of Mathematical and Statistical Sciences,\\
	The University of Texas Rio Grande Valley, Edinburg, Texas 78541, USA}
	\date{}
	\maketitle
\begin{abstract}
In the present paper, we are concerned with integrable discretization of a modified Camassa-Holm equation with linear dispersion term. The key of the construction is the semi-discrete analogue for a set of bilinear equations of the modified Camassa-Holm equation. Firstly, we show that these bilinear equations and their determinant solutions either in Gram-type or Casorati-type can be reduced from the discrete KP equation through Miwa transformation. Then, by scrutinizing the reduction process, we obtain a set of semi-discrete bilinear equations and their general soliton solution in Gram-type or Casorati-type determinant form. Finally, by defining dependent variables and discrete hodograph transformations, we are able to derive  an integrable semi-discrete analogue of the modified Camassa-Holm equation. It is also shown that the semi-discrete modified Camassa-Holm equation converges to the continuous one in the continuum limit.
\end{abstract}
\section{Introduction}
In this paper, we consider integrable semi-discretization of the modified Camassa-Holm (mCH) equation
\begin{align}
m_t+2\kappa^2u_x+[m(u^2-u_x^2)]_x=0,\quad m=u-u_{xx},\label{mch}
\end{align}
which firstly appeared in the papers of Fokas\upcite{Fokas}, Fuchssteiner\upcite{Fuch}, Olver and Rosenau\upcite{olv} and later, was rediscovered by Qiao\upcite{qiao,qiao1}. Here $u=u(x,t)$ is a real-valued function, and the subscripts $x$ and $t$ appended to $m$ and $u$ denote partial differentiation. The  parameter $\kappa$ characterizes the magnitude of the linear dispersion.
The mCH equation (\ref{mch}) has attracted considerable attention  in the past two decades due to its rich mathematical structure and solutions.  Several groups have studied the mCH equation from various aspects such as the well-posedness, regularization, Cauchy problem, Riemann-Hilbert problem, long-time asymptotics and Liouville correspondence with the mKdV equation \upcite{Qu3,Tang,regularization,Qu1,FORQ1,cauchy,RHP,longtime,OlverQu2016}. The smooth soliton solutions were constructed by different approaches such as the Hirota's bilinear method\upcite{mats,hu-soliton}, Darboux transformation/B\"acklund transformation method\upcite{DarbouxQiao, BacklundQP,FORQ2}.  Recently, Chang et al. considered the Lax integrability and the conservative peakon solutions in a series of work\upcite{Chang1,Chang2,Chang3}  while Gao et al. studied the patched peakon weak solution\upcite{Patched}.
Similar to the Camassa-Holm equation\upcite{CH}, the wave-breaking and blow-up phenomena are very intriguing problems which have been studied by several authors\upcite{wave-break,blow-up,oscillation}. The stability including the orbital stability of the mCH equation was analyzed in Refs. \upcite{stab,orbital}.

Recently, much attention has been paid to the study of discrete integrable systems that are related to many other fields such as
quantum field theory, numerical algorithms, random matrices, orthogonal and bi-orthogonal polynomials\upcite{Disbook}.  In compared with the continuous integrable systems,  the examples of discrete integrable systems, as well as the tools of analyzing them are much less. On the other hand, it is believed that  discrete integrable systems are more fundamental and universal than continuous ones.  Starting from the discrete Kadomtsev-Petviashvili (KP) equation, or the so-called Hirota--Miwa (HW) equation\upcite{Hirota-1981, Miwa-1982}, Shi {\it et al.} have derived discrete KdV equation and discrete potential mKdV equation, as well as their Lax pairs and multi-soliton solutions\upcite{Nimmo-2014,Nimmo-2017}. The authors have done a series of work in finding integrable discretizations of soliton equations such as the short pulse equation\upcite{Feng1,Feng2}, (2+1)-dimensional Zakharov equation\upcite{Yu}, the Camassa-Holm equation\upcite{Feng3,d-CH3} and the Degasperis-Proceli equaiton\upcite{dDP}.  Therefore, it is a natural but definitely not a trivial problem for us to construct an integrable discrete analogue of the mCH equation (\ref{mch}).

The remainder of the paper is organized as follows. In section \ref{sec2}, starting from the discrete KP equation and its tau function, we derive a set of bilinear equations that belong to the negative KdV hierarchy through a series of transformations including Miwa transformation. Moreover, by introducing a hodograph and dependent variable transformations, we show that the mCH equation
(\ref{mch}) is deduced from a set of bilinear equations in question. As a by-product, we provide the multi-soliton solution to the mCH equation in both Gram-type and Casorati-type determinant.  In section \ref{sec3},  by scrutinizing the process in deriving the mCH equation from the discrete KP equation, we propose semi-discrete analogues of a set of bilinear equations which derive the mCH equation. Based on these discrete bilinear equations, we construct an integrable semi-discrete mCH equation and present its $N$-soliton solutions.  Section \ref{sec4} is devoted to a brief summary and discussion. Some detailed proofs are given in Appendices.
	
\section{From the discrete KP equation to the mCH equation}\label{sec2}
In this section, we will show that the bilinear equations and the multi-soliton solution to the mCH equation (\ref{mch}) given by Matsuno\upcite{mats} can be generated from the discrete KP equation and its determinant solution through a series of transformations including the Miwa transformation, hodograph transformation and dependent variable transformation.
\subsection{A brief review for the discrete KP equation}
The discrete Kadomtsev-Petviashvili (KP) equation, or the so-called Hirota--Miwa (HW) equation\upcite{Hirota-1981, Miwa-1982}, is a three-dimensional discrete integrable system
\begin{gather}\label{H-M-1}
(a_1-a_2)\tau_{12}\tau_3+(a_2-a_3)\tau_{23}\tau_1+(a_3-a_1)\tau_{13}\tau_2=0,
\end{gather}
where lattice parameters $a_k$ are distinct constants, $k=1, 2, 3$, and for $\tau=\tau(k_1,k_2,k_3)$ each subscript $i$ denotes a forward shift in the corresponding discrete variable $k_i$. It was discovered by Hirota\upcite{Hirota-1981} as a fully discrete analogue of the two-dimensional Toda equation and later Miwa\upcite{Miwa-1982} showed that it was intimately related to the KP hierarchy.

The discrete KP hierarchy, or the Hirota-Miwa system, can be expressed by an infinite number of bilinear equations with $(k_i, k_j, k_m)$ which are taken from $(k_1, k_2, k_3, \cdots)$.
\begin{equation}
(a_{i}-a_{j}) {\tau }_{ij} \tau _{m} + (a_{j}-a_{m})  \tau_{jm} \tau _{i}+(a_{m}-a_{i}) {\tau }_{mi} \tau _{j}=0\,.
\end{equation}
The Hirota--Miwa equation (\ref{H-M-1}) arises as the compatibility condition of the linear system\upcite{Nimmo-1997}
\begin{gather}\label{H-M-LP-1}
\phi_i-\phi_j=(a_i-a_j)u^{ij}\phi, \qquad 1\leq i<j\leq3,
\end{gather}
where $\phi=\phi(k_i,k_j,k_m)$, $\tau=\tau(k_i,k_j,k_m)$, $u^{ij}=\tau_{ij}\tau/(\tau_i\tau_j)$. Each subscript $i$ denotes a forward shift in the corresponding discrete variable~$n_i$, for example, $\phi_{i} = \phi(k_i+1,k_j,k_m)$.

It is known that the discrete KP equation admits a general solution in terms of the following Gram-type  determinant\upcite{OHTI_JPSJ}
\[
\tau (k_{1},k_{2},k_{3})=\Big|m_{ij}(k_{1},k_{2},k_{3})\Big|_{1\leq i,j\leq
N}
\]
where
\[
m_{ij} (k_{1},k_{2},k_{3})=c_{ij}+\frac{1}{p_{i}+q_{j}} \prod_{l=1}^3 \phi(a_{l},k_{l})\bar{\phi}(a_{l},k_{l})
\]
with
\[
\phi(a_{l},k_{l})=\left(a_{l}-p_{i}\right) ^{-k_{l}}, \quad  \bar{\phi}(a_{l},k_{l})=\left( {a_{l}+q_{j}}\right) ^{k_{l}}
\]
Therefore, the Gram-type solution can be specifically expressed as
\begin{equation}
\tau (k_{1},k_{2},k_{3})=\Big|c_{ij}+\frac{1}{p_{i}+q_{j}} \left(- \frac{p_{i}-a_{1}}{q_{j}+a_{1}}\right) ^{-k_{1}} \left(- \frac{p_{i}-a_{2}}{q_{j}+a_{2}}\right) ^{-k_{2}} \left(-\frac{p_{i}-a_{3}}{q_{j}+a_{3}}\right) ^{-k_{3}}\Big|\,.
\label{HW-Gram}
\end{equation}
Notice that
\begin{eqnarray*}
\left( -\frac{p_{i}-a_{1}}{q_{j}+a_{1}}\right) ^{-k_{1}} &=&\exp \left(
-k_{1}\ln \left( \frac{1-a^{-1}_{1}p_{i}}{1+a^{-1}_{1}q_{j}}\right) \right) \\
&=&\exp \left( (p_{i}+q_{j})a^{-1}_{1}k_{1}+(p_{i}^{2}-q_{j}^{2})\frac{1}{2}%
a_{1}^{-2}k_{1}+(p_{i}^{3}+q_{j}^{3})\frac{1}{3}%
a_{1}^{-3}k_{1}+\cdots \right)\,.
\end{eqnarray*}
Thus by defining the so-called Miwa transformation
\begin{equation*}
x_{1} =\sum_{j=1}^3 a^{-1}_{j}k_{j}\,,\ \   x_{2} =\frac{1}{2}\sum_{j=1}^3 a_{j}^{-2}k_{j}, \ \ \cdots \ \ x_{n} =\frac{1}{n}\sum_{j=1}^3 a_{j}^{-n}k_{j},
\end{equation*}
and further by using elementary Schur polynomial
\[
\exp \left( \sum t^{n}x_{n}\right) =\sum p_{n}(\vec{x})t^{n}\,,\quad
\vec{x}=(x_{1},x_{2},\cdots ,x_{n}),
\]
one can obtain the whole KP hierarchy\upcite{OHTI_JPSJ}
\begin{eqnarray*}
&&\left((a_{2}-a_{3})\sum a_{1}^{-K}a_{2}^{-L}a_{3}^{-M}p_{K}( -\frac{1}{2}%
\widetilde{D}) p_{L}( \frac{1}{2}\widetilde{D}) p_{M}(
\frac{1}{2}\widetilde{D})\right. \\
&& +(a_{3}-a_{1})\sum a_{1}^{-K}a_{2}^{-L}a_{3}^{-M}p_{K}( \frac{1}{2}%
\widetilde{D}) p_{L}( -\frac{1}{2}\widetilde{D})
p_{M}( \frac{1}{2}\widetilde{D})  \\
&&\left.+(a_{1}-a_{2})\sum a_{1}^{-K}a_{2}^{-L}a_{3}^{-M}p_{K}( \frac{1}{2}%
\widetilde{D}) p_{L}( \frac{1}{2}\widetilde{D}) p_{M}(
-\frac{1}{2}\widetilde{D})\right) \tau \cdot \tau=0
\end{eqnarray*}
where $\widetilde{D}=\left(D_{x_{1}},\frac{1}{2}D_{x_{2}},\cdots ,\frac{1}{n}%
D_{x_{n}}\right)$.
If we take $(K,L,M)=(1,2,3)$, then we obtain the bilinear equation for the  KP equation
\[
(D_{x_{1}}^{4}-4D_{x_{1}}D_{x_{3}}+3D_{x_{2}}^{2})\tau \cdot \tau =0
\]
Next, let us show how we can generate fully discrete 2-dimensional Toda lattice (2DTL) and the whole 2DTL hierarchy. To this end, we reparametrize $p_{i}$ and $q_{j}$ by
\[
a_{1}-p_{i}=-\widetilde{p}_{i},a_{1}+q_{j}=\widetilde{q}_{j},
\]
so that
\begin{eqnarray*}
  &&c_{ij}+\frac{1}{p_{i}+q_{j}} \left(- \frac{p_{i}-a_{1}}{q_{j}+a_{1}}\right) ^{-k_{1}} \left(- \frac{p_{i}-a_{2}}{q_{j}+a_{2}}\right) ^{-k_{2}} \left(-\frac{p_{i}-a_{3}}{q_{j}+a_{3}}\right) ^{-k_{3}}
   \\
  &&  \rightarrow c_{ij}+ \left( -\frac{\widetilde{p}_{i}}{%
\widetilde{q}_{j}}\right) ^{-(k_{1}+k_3)}\left( \frac{1-a\widetilde{p}_{i}}{1+a%
\widetilde{q}_{j}}\right) ^{-k_{2}}\left( \frac{1-b \widetilde{p}^{-1}_{i}}{%
1+b \widetilde{q}^{-1}_{j}}\right) ^{-k_{3}},
\end{eqnarray*}
where $a^{-1}=a_2-a_1$, $b=a_3-a_1$.  Moreover, let $k_1+k_3+1=n$, $k_2=k$, $k_3=l$, then the HW equation  (\ref{H-M-1}) is converted into
\begin{eqnarray*}
&&(1 -ab)\tau_n (k,l)\tau_n (k+1,l+1)
+ab \tau_{n-1} (k+1,l)\tau_{n+1} (k,l+1)-  \tau_n (k, l+1)\tau_n (k+1,l)=0
\end{eqnarray*}
which is exactly the bilinear equation of the discrete two-dimensional Toda-lattice (2DTL) equation. By dropping the tilde, the discrete 2DTL equation admits the following general soliton
\begin{eqnarray}
&&\tau_n (k,l)= \left| c_{ij}+ \frac{1}{p_i+q_j} \left(-\frac{p_i}{q_j}\right)^{-n}   \left( \frac{1-a{p}_{i}}{1+a%
{q}_{j}}\right) ^{-k}\left( \frac{1-b {p}^{-1}_{i}}{%
1+b {q}^{-1}_{j}}\right) ^{-l}
 \right|\,.
 \label{d2DTL-Gram}
\end{eqnarray}
Upon applying the Miwa transformations in both positive and negative flows
\begin{align}
&x_{1} =ka\,,\ \  x_{2} =\frac{1}{2} ka^{2}, \ \ \cdots, \ \ x_{n} =\frac{1}{n} ka^{n},\\
&x_{-1} =lb\,,\ \   x_{-2} =\frac{1}{2} lb^{2}, \ \ \cdots, \ \ x_{-n} =\frac{1}{n} lb^{n},
\end{align}
one obtains the whole 2DTL hierarchy, among which
\begin{equation*}
\left(\frac{1}{2} D_{x_1} D_{x_{-1}}-1\right)\tau_n \cdot\tau_n
=-\tau_{n+1}\tau_{n-1}\,,
\end{equation*}
is the celebrating bilinear equation of the  2DTL equation that admits the Gram-type determinant solution
\begin{equation}
\tau_n=\left|c_{ij}+ \frac{1}{p_{i}+q_{j}} \left(-\frac{p_i}{q_j}\right)^{-n} e^{\xi_i+\eta_j} \right|,
\end{equation}
with
\[
\xi_i= \frac{1}{p_i} x_{-1} + p_i x_1 +  \cdots, \quad \eta_j= \frac{1}{q_j} x_{-1} + q_j x_1 +  \cdots\,.
\]
\subsection{From the discrete KP equation to the modified CH equation}
First, we give a lemma regarding bilinear equations of the mCH equation (\ref{mch}).
\begin{lem}
The following bilinear equations
\begin{align}
&(D_{x_1}+a)\tau_{n,k+1}\cdot \tau_{n+1,k}=a\tau_{n+1,k+1}\tau_{n,k},\label{BL1}\\
&(D_{x_{-1}}(D_{x_1}+a)-2)\tau_{n,k+1}\cdot\tau_{n+1,k}=(aD_{x_{-1}}-2)\tau_{n+1,k+1}\cdot \tau_{n,k},\label{BL2}
\end{align}
is generated from the discrete KP equation (\ref{H-M-1}). Moreover, above two bilinear equations admit
determinant solution of Gram-type
\begin{equation}
		\tau_{n,k}=\left| m^{n,k}_{ij}\right|=\left| c_{ij}+ \left(-\frac{{p}_{i}}{{q}_{j}}\right)^{-n}\left(-\frac{{p}_{i}+a}{{q}_{j}-a}\right)^{-k} e^{\xi_i+\eta_j} \right|,
		\label{Gram}
\end{equation}
where
\begin{align*}
	\xi_i=p_ix_1+\frac{1}{p_i+a}x_{-1}+\xi_{i0},\,\ \eta_i=q_i x_1+\frac{1}{q_i-a}x_{-1}+\eta_{i0}.
\end{align*}
Here $c_{ij}, a, p_i,q_i,\xi_{i0},\eta_{i0}$ are the arbitrary parameters.
\end{lem}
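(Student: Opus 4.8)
The plan is to produce both bilinear equations \emph{and} their determinant solution as Miwa reductions of the discrete KP equation (\ref{H-M-1}), following the route used above to reach the 2DTL hierarchy but now realizing \emph{four} independent directions instead of three. First I would embed the problem in the discrete KP equation with four discrete variables $k_1,k_2,k_3,k_4$ whose lattice parameters are chosen so that, after reparametrization, the target variables appear: a direction with parameter $0$ produces the factor $(-p_i/q_j)^{-n}$ and becomes the discrete variable $n$; a direction with parameter $-a$ produces $(-(p_i+a)/(q_j-a))^{-k}$ and becomes $k$; a positive Miwa limit $a_3=\alpha\to\infty$, with $x_1=k_3/\alpha$ fixed, turns the factor $((1-p_i/\alpha)/(1+q_j/\alpha))^{-k_3}$ into $\exp((p_i+q_j)x_1)$; and a negative Miwa limit reparametrized about the point $-a$ (namely $a_4=-a+c$, $c\to0$, $x_{-1}=c\,k_4$ fixed) splits the fourth factor as $(-(p_i+a)/(q_j-a))^{-k_4}\exp((\tfrac{1}{p_i+a}+\tfrac{1}{q_j-a})x_{-1})$, its leading part being absorbed into $k$ and its residual furnishing the $x_{-1}$-flow. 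These choices reproduce exactly the prefactors and the exponents $\xi_i,\eta_j$ of (\ref{Gram}), including the asymmetric role of $a$: present in $k$ and in the $x_{-1}$-coefficient $\tfrac{1}{p_i+a}$, absent from $n$ and from the $x_1$-coefficient $p_i$.

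Second, with the four directions fixed, I would read off (\ref{BL1}) from the discrete KP equation for the triple (positive-Miwa, $n$, $k$) with parameters $(\alpha,0,-a)$. Dividing by $\alpha$ and using the Miwa shift $\tau\to\tau+\alpha^{-1}\partial_{x_1}\tau+O(\alpha^{-2})$ for a unit step in the third direction, the $\alpha^0$ order is an identity while the $\alpha^{-1}$ order collapses---after recognizing $(\partial_{x_1}\tau_{n+1,k})\tau_{n,k+1}-\tau_{n+1,k}\partial_{x_1}\tau_{n,k+1}$ as a Hirota derivative and using antisymmetry---precisely to (\ref{BL1}). Equation (\ref{BL2}) I would obtain from a triple containing \emph{both} Miwa directions together with one discrete direction, now carrying out a double expansion (first order in $\alpha^{-1}$ for $x_1$ and in $c$ for $x_{-1}$) and collecting the cross term $D_{x_1}D_{x_{-1}}$, the first-order term $aD_{x_{-1}}$, and the constant ($-2$) contributions. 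Structurally (\ref{BL2}) is the $a$-deformed, $k$-shifted analogue of the 2DTL bilinear equation $(\tfrac12 D_{x_1}D_{x_{-1}}-1)\tau_n\cdot\tau_n=-\tau_{n+1}\tau_{n-1}$ recalled above, the $n\pm1$ shifts there being replaced by the coupled $(n,k)$ shifts that the $x_{-1}$-direction induces.

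Third, for the solution I would track the discrete KP Gram determinant (\ref{HW-Gram}) (equivalently the form (\ref{d2DTL-Gram})) through the very same limits. Entry by entry, $c_{ij}+\frac{1}{p_i+q_j}\prod_l(\cdots)^{-k_l}$ reduces to $c_{ij}+\frac{1}{p_i+q_j}(-p_i/q_j)^{-n}(-(p_i+a)/(q_j-a))^{-k}e^{\xi_i+\eta_j}$, i.e.\ to the entry $m_{ij}^{n,k}$ of (\ref{Gram}); the Cauchy factor $1/(p_i+q_j)$ is inherited from (\ref{HW-Gram}) and is what gives the genuine multi-soliton structure, since it makes $\partial_{x_1}m_{ij}$ rank one. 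As (\ref{HW-Gram}) solves (\ref{H-M-1}) identically in the discrete variables and parameters, its reduction automatically solves the reduced equations, yielding (\ref{BL1}) and (\ref{BL2}) at once. As an independent check I would verify (\ref{Gram}) directly: differentiation gives the rank-one relations $\partial_{x_1}m_{ij}^{n,k}=\Phi_i\Psi_j$ and $\partial_{x_{-1}}m_{ij}^{n,k}=\widehat\Phi_i\widehat\Psi_j$ (the Cauchy factor cancelling the $p_i+q_j$ and the $\tfrac{1}{p_i+a}+\tfrac{1}{q_j-a}=\tfrac{p_i+q_j}{(p_i+a)(q_j-a)}$ that differentiation brings down), while the $n$- and $k$-shifts act multiplicatively on the soliton part; these let me express every term of (\ref{BL1})--(\ref{BL2}) as a bordered determinant, so that each equation reduces to a Jacobi (Pl\"ucker) identity.

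Finally, the main obstacle I anticipate is not (\ref{BL1}), which falls out of a single first-order Miwa expansion, but (\ref{BL2}): keeping the double expansion consistent while the fourth direction simultaneously carries a discrete $k$-shift and the continuous $x_{-1}$-flow, and threading the parameter $a$ through the reparametrization about $-a$ so that the precise coefficients and the constant $-2$ emerge. A secondary point of care is consistency between the two reductions---the four directions, the definitions of $n,k,x_1,x_{-1}$, and the constants $c_{ij},\xi_{i0},\eta_{j0}$ must be identical---so that the single determinant (\ref{Gram}) solves (\ref{BL1}) and (\ref{BL2}) simultaneously.
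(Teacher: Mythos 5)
Your proposal is correct and is essentially the paper's own proof: the same four effective directions (lattice parameter $0$ carrying $n$, parameter $-a$ carrying $k$, a positive Miwa limit furnishing $x_1$, and a negative Miwa limit about $-a$ whose discrete residual is absorbed into the $k$-shift), the same extraction of (\ref{BL1}) from the leading nontrivial order and of (\ref{BL2}) from the first order in the negative Miwa parameter of the doubly expanded discrete KP equation, and the same entrywise tracking of the Gram solution through those limits. The only divergence is that you retain the Cauchy factor $1/(p_i+q_j)$ in the reduced Gram entries and correctly flag it as what makes the $x_1$-derivative of the entry rank one, whereas the statement (\ref{Gram}) and the paper's proof silently drop this factor inherited from (\ref{HW-Gram}); on that detail your version is the more careful one.
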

\begin{proof}
Notice that the element in Gram-type solution of the discrete KP equation (\ref{HW-Gram}) can be rewritten as
\begin{eqnarray*}
  && m_{ij}=c_{ij}+\left( \frac{a_{1}-p_{i}}{a_{1}+q_{j}}\right) ^{-k_{1}}\left( \frac{%
a_{2}-p_{i}}{a_{2}+q_{j}}\right) ^{-k_{2}}\left( \frac{a_{3}-p_{i}}{a_{3}+q_{j}}\right) ^{-k_{3}} \\
  &&  \rightarrow c_{ij}+\left( -\frac{{p}_{i}}{
{q}_{j}}\right) ^{-k_{3}}  \left( \frac{a_{1}-p_{i}}{a_{1}+q_{j}}\right) ^{-k_{1}}\left( \frac{%
a_{2}-p_{i}}{a_{2}+q_{j}}\right) ^{-k_{2}} \left( \frac{1-a_{3}p^{-1}_{i}}{%
1+a_{3}q^{-1}_{j}}\right) ^{-k_{3}} \\
&& \rightarrow  c_{ij} + \left( -\frac{\widetilde{p}_{i}}{\widetilde{q}_{j}}\right) ^{-k_{1}}
  \left( -\frac{\widetilde{p}_{i}+a}{%
\widetilde{q}_{j}-a}\right) ^{-k_{3}} \left( \frac{1-b\widetilde{p}_{i}}{%
1+b\widetilde{q}_{j}}\right) ^{-k_{2}}  \left( \frac{1-a_{3}p^{-1}_{i}}{%
1+a_{3}q^{-1}_{j}}\right) ^{-k_{3}}
\end{eqnarray*}
by letting ${p}_i-a_1=\widetilde{p}_{i}$, ${q}_i+a_1=\widetilde{q}_{i}$ and $a_2-a_1=b^{-1}$, $a=a_{1}$.
Let $k_{3}=l$, $k_1=n$, $k_2=k$, then the discrete KP equation (\ref{H-M-1}) becomes
\begin{eqnarray}
&& (a_3-b^{-1}-a) \tau_{n}(k+1,k_2+1,k_3+1) \tau _{n+1}(k+1,k_{2},k_3) \nonumber \\
&&+(a-a_3 )\tau_{n}(k,k_2+1,k_3) \tau _{n+1}(k+1,k_{2},k_3+1)  \nonumber\\
&&+b^{-1} \tau _{n}(k+1,k_2,k_3+1)\tau _{n+1}(k,k_2+1,k_3) =0.
\label{full-discrete}
\end{eqnarray}
Applying Miwa transformation by taking $a_3 \to 0$ and $b \to 0,$
\begin{equation*}
x_{1} =kb\,,\ \  x_{2} =\frac{1}{2} kb^2, \ \ \cdots \ \  x_{n} =\frac{1}{n} kb^{n},
\end{equation*}
\begin{equation*}
x_{-1} =la_3\,,\ \   x_{-2} =\frac{1}{2} la^2_3, \ \ \cdots \ \  x_{-n} =\frac{1}{n} la^n_3,
\end{equation*}
we obtain an infinite number of bilinear equations as follows
\begin{eqnarray*}
&&\sum_{L,M} (a_3-b^{-1}-a ) b^{L} a_3^M  p_{L}\left(\frac{1}{2}\widetilde{D}_+ \right)
p_{M}\left( \frac{1}{2}\widetilde{D}_- \right) \tau _{n}(k+1)\cdot \tau _{n+1}(k)\\
&&+ \sum_{L,M} (a-a_3)  b^{-L} a_3^M  p_{L}
\left( \frac{1}{2}\widetilde{D}_+ \right) p_{M}\left( -\frac{1}{2}\widetilde{D}_- \right) \tau _{n}(k)\cdot \tau _{n+1}(k+1) \\
&&+ \sum_{L,M} b^{L-1} a_3^M  p_{L}
\left(- \frac{1}{2}\widetilde{D}_+\right) p_{M}\left( \frac{1}{2}\widetilde{D}_-\right) \tau _{n} (k+1)\cdot \tau _{n+1}(k)=0,
\end{eqnarray*}
where
\[
\widetilde{D}_+=\left(D_{x_{1}},\frac{1}{2}D_{x_{2}},\cdots ,\frac{1}{n}%
D_{x_{n}}\right)\,, \quad
\widetilde{D}_{-}=\left(D_{x_{-1}},\frac{1}{2}D_{x_{-2}},\cdots ,\frac{1}{n}%
D_{x_{-n}}\right)\,.
\]
At the order of $a_3^0b^0$, we have
\begin{eqnarray*}
&&(- p_{1}\left(\frac{1}{2}\widetilde{D}_+ \right)  -a) \tau _{n}(k+1)\cdot \tau _{n+1}(k)\\
&&+  a \tau _{n}(k)\cdot \tau _{n+1}(k+1) +  p_{1}
\left(- \frac{1}{2}\widetilde{D}_+\right)  \tau _{n} (k+1)\cdot \tau _{n+1}(k)=0,
\end{eqnarray*}
which gives Eq. (\ref{BL1}).

At the order of $a_3^1b^0$, we have
\begin{eqnarray*}
&&(1- p_{1}\left(\frac{1}{2}\widetilde{D}_+ \right) p_{1}\left(\frac{1}{2}\widetilde{D}_- \right)  -a p_{1}\left(\frac{1}{2}\widetilde{D}_- \right) ) \tau _{n}(k+1)\cdot \tau _{n+1}(k)\\
&&+  (a p_{1}\left(-\frac{1}{2}\widetilde{D}_- \right)  -1) \tau _{n}(k)\cdot \tau _{n+1}(k+1)
+  p_{1}\left(- \frac{1}{2}\widetilde{D}_+\right)  p_{1}\left(\frac{1}{2}\widetilde{D}_- \right)   \tau _{n} (k+1)\cdot \tau _{n+1}(k)=0,
\end{eqnarray*}
which is equivalent to Eq. (\ref{BL2})
\end{proof}

In addition to the Gram-type determinant solution, the tau functions to above  two bilinear equations (\ref{BL1})--(\ref{BL2}) can also be expressed by the Casorati determinant, which is given by the following lemma.
\begin{lem}
	The  bilinear equations (\ref{BL1})--(\ref{BL2})
	have the following Casorati-type determinant solution
	\begin{equation}
		\tau_{n,k}=\left|\begin{array}{cccc}
	f_{n}^{(1)}(k) & f_{n+1}^{(1)}(k) & \cdots & f_{n+N-1}^{(1)}(k) \\
	f_{n}^{(2)}(k) & f_{n+1}^{(2)}(k) & \cdots & f_{n+N-1}^{(2)}(k) \\
	\cdots & \cdots & \cdots & \ldots \\
	f_{n}^{(N)}(k) & f_{n+1}^{(N)}(k) & \cdots & f_{n+N-1}^{(N)}(k)
	\end{array}\right|,
	\label{Casorati}
	\end{equation}
	where
	\begin{align}
	&f_{n}^{(i)}(k)=c_{i}p_i^n(p_i-a)^ke^{\xi_i}+d_iq_i^n(q_i-a)^ke^{\eta_i},\\
	&\xi_i=p_ix_1+\frac{1}{p_i-a}x_{-1}+\xi_{i0},\,\ \eta_i=q_ix_1+\frac{1}{q_i-a}x_{-1}+\eta_{i0}.
	\end{align}
	Here $c_{i},d_i,p_i,q_i,\xi_{i0},\eta_{i0}$ are arbitrary parameters.
\end{lem}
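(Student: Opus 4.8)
The plan is to verify that the Casorati-type determinant (\ref{Casorati}) satisfies the two bilinear equations (\ref{BL1})--(\ref{BL2}) by reducing both equations to Jacobi-type determinant identities (Plücker relations). The strategy parallels the standard Wronskian/Casoratian technique: first I would express the shifts appearing in each Hirota derivative as operations on the columns of the determinant, then recognize the resulting bilinear combination as an expansion of a larger determinant by the Laplace/Jacobi identity.

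First I would record how the discrete shifts $n\to n+1$, $k\to k+1$ and the continuous derivatives $\partial_{x_1}$, $\partial_{x_{-1}}$ act on the entries $f_n^{(i)}(k)=c_ip_i^n(p_i-a)^ke^{\xi_i}+d_iq_i^n(q_i-a)^ke^{\eta_i}$. The key observation is that $\partial_{x_1}f_n^{(i)}(k)=f_{n+1}^{(i)}(k)$, since differentiating in $x_1$ brings down a factor $p_i$ (resp. $q_i$) which is exactly the effect of the index shift $n\to n+1$; likewise $\partial_{x_{-1}}$ acts through the factors $1/(p_i-a)$ and $1/(q_i-a)$, which is precisely the shift $k\to k-1$ combined with an extra power. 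This means the columns of $\tau_{n,k}$ shifted in $x_1$ or $x_{-1}$ are again columns of the same Casoratian but with indices displaced, so every Hirota-derivative term becomes a determinant built from shifted column vectors.

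Next I would rewrite each bilinear equation as a determinant identity. For (\ref{BL1}), the shift operator $a$ and the operator $D_{x_1}+a$ acting on a product $\tau_{n,k+1}\cdot\tau_{n+1,k}$ should collapse, after substituting the column-shift rules above, into the Plücker relation obtained by expanding a single $(N+1)\times(N+1)$ (or bordered) determinant two different ways. The $k$-dependence contributes the multiplicative factors $(p_i-a)$ and $(q_i-a)$, which precisely account for the coefficient $a$ and the discrete shift in $k$. For (\ref{BL2}), which is second order in the Hirota derivatives, I would introduce auxiliary determinants where one or two columns have been replaced by $x_{-1}$-differentiated columns, and then exhibit the combination $(D_{x_{-1}}(D_{x_1}+a)-2)\tau_{n,k+1}\cdot\tau_{n+1,k}-(aD_{x_{-1}}-2)\tau_{n+1,k+1}\cdot\tau_{n,k}$ as a vanishing Plücker identity among these bordered determinants.

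The main obstacle I anticipate is the second equation (\ref{BL2}): the mixed second-order operator $D_{x_{-1}}D_{x_1}$ together with the inhomogeneous $-2$ term and the factor $a$ means the relevant Plücker relation will involve several bordered columns simultaneously, and keeping track of the correct index shifts in both $n$ and $k$ while assembling the right $(N+2)\times(N+2)$ determinant is delicate. An alternative, cleaner route would be to avoid re-deriving the identities from scratch and instead invoke the reduction already established in Lemma 1.1: since both (\ref{BL1})--(\ref{BL2}) descend from the discrete KP equation via the Miwa transformation, and the Casorati form (\ref{Casorati}) is the Casoratian counterpart of the Gram determinant (\ref{Gram}) under the same reduction, one need only check that the dispersion relations $\xi_i,\eta_i$ and the index dependence $p_i^n(p_i-a)^k$ are consistent with the Miwa-reduced parametrization; the bilinear identities then follow automatically from those already proved for the Gram-type solution. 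I would present the direct Casoratian verification as the primary argument and relegate the lengthier determinant manipulations to an appendix.
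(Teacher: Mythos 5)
Your proposal follows essentially the same route as the paper's own proof in Appendix A: establish the column rules $\partial_{x_1}f_{n,k}^{(i)}=f_{n+1,k}^{(i)}$, $\partial_{x_{-1}}f_{n,k}^{(i)}=f_{n,k-1}^{(i)}$ and $f_{n+1,k}^{(i)}=f_{n,k+1}^{(i)}+af_{n,k}^{(i)}$ (the last being what your remark about the factors $(p_i-a)$, $(q_i-a)$ amounts to), rewrite every term of (\ref{BL1})--(\ref{BL2}) as a Casoratian with shifted columns, and close the argument with Pl\"ucker relations --- the paper indeed needs several such identities and a splitting of the left-hand side of (\ref{BL2}) into two parts, exactly the delicacy you anticipate. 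Two minor cautions: $\partial_{x_{-1}}$ is \emph{exactly} the down-shift $k\to k-1$ (no ``extra power''), and your fallback argument deducing the Casorati case from the Gram-type result would require making the Casorati--Gram equivalence precise, so the direct Casoratian verification you designate as primary is the right (and the paper's) choice.
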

 The proof is given in Appendix A.

Now we proceed to an implementation of  period-2 reduction which will result in a negative KdV hierarchy. To this end, we impose the constraints
\begin{align}
q_i=-p_i\, \quad d_i = \mathrm{i} c_i\,, \label{red}
\end{align}
for Casorati-type solution (\ref{Casorati}) or
  \begin{align}
q_i=p_i\, \quad c_{ij}= \mathrm{i} \delta_{ij}\,, \label{blue}
\end{align}
for Gram-type solution (\ref{d2DTL-Gram}). As a result, we have
\begin{equation}
\tau_{n,k}\Bumpeq\tau_{n+2,k}, \ \ \tau_{n,k}\Bumpeq\bar{\tau}_{n+1,k}.
\end{equation}
Here $\Bumpeq$ means two $\tau$-functions are equivalent up to a constant multiple and $\bar{\tau}$ means the complex conjugate of $\tau$.

In addition, by taking variable transformations $y=x_1,\tau=\frac{2}{\kappa}x_{-1}$, $a=\frac{1}{2\kappa}$ and defining
\begin{align}
f=\tau_{00}, \ \ \tilde{f}=\tau_{10},\ \ g=\tau_{01},\ \ \tilde{g}=\tau_{11},\label{t}
\end{align}
we can obtain the following four bilinear equations.
\begin{align}
&D_{y} f \cdot \tilde{g}-\frac{1}{2 \kappa}(f \tilde{g}-\tilde{f} g)=0,\label{1}\\
&D_{y} \tilde{f} \cdot g-\frac{1}{2 \kappa}(\tilde{f} g-f \tilde{g})=0,\label{2}\\
&D_{\tau} D_{y} f \cdot \tilde{g}-\frac{1}{2 \kappa} D_{\tau} f \cdot \tilde{g}+\frac{1}{2 \kappa} D_{\tau} \tilde{f} \cdot g-\kappa(f \tilde{g}-\tilde{f} g)=0,\label{3} \\
&D_{\tau} D_{y} \tilde{f} \cdot g-\frac{1}{2 \kappa} D_{\tau} \tilde{f} \cdot g+\frac{1}{2 \kappa} D_{\tau} f \cdot \tilde{g}-\kappa(\tilde{f} g-f \tilde{g})=0, \label{4}
\end{align}
from Eqs. (\ref{BL1})--(\ref{BL2}). It should be pointed out that above four bilinear equations belong to the negative KdV hierarchy, which
derive the mCH equation as discovered in Ref. \cite{mats}. We summarize the result by the following lemma.
\begin{lem}
	The mCH equation (\ref{mch}) is derived from bilinear equations (\ref{1})--(\ref{4})
through the hodograph transformation
\begin{align}
&x=\frac{y}{\kappa}+\ln\frac{g\tilde{g}}{f\tilde{f}}+d,\label{hodo}\\
&t=\tau,
\end{align}
and dependent variables transformation
\begin{align}
u=\frac{1}{2\mathrm{i}\kappa}\left(\ln \frac{\tilde{f}\tilde{g}}{fg}\right)_\tau.\label{u}
\end{align}
\end{lem}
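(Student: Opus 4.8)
The plan is to regard $(y,\tau)$ and $(x,t)$ as two coordinate systems linked by the hodograph transformation (\ref{hodo}), to verify that this map is non-degenerate, to express every $x$- and $t$-derivative of $u$ through $y$- and $\tau$-derivatives of the four tau functions, and finally to substitute into (\ref{mch}) and check that it collapses to an identity forced by (\ref{1})--(\ref{4}). Since $t=\tau$ while $x$ depends on both $y$ and $\tau$ through the tau functions, the chain rule gives $\partial_x=\frac{1}{x_y}\partial_y$ and $\partial_t=\partial_\tau-\frac{x_\tau}{x_y}\partial_y$, with $x_y=\frac1\kappa+\partial_y\ln\frac{g\tilde g}{f\tilde f}$ and $x_\tau=\partial_\tau\ln\frac{g\tilde g}{f\tilde f}$; these two metric coefficients must be evaluated from the bilinear equations before anything else.

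First I would extract the spatial data from the ``$D_y$-only'' equations (\ref{1})--(\ref{2}). Dividing (\ref{1}) by $f\tilde g$ and (\ref{2}) by $\tilde f g$ turns them into $\partial_y\ln\frac{f}{\tilde g}=\frac1{2\kappa}(1-\phi)$ and $\partial_y\ln\frac{\tilde f}{g}=\frac1{2\kappa}(1-\phi^{-1})$, where $\phi:=\frac{\tilde f g}{f\tilde g}$. Adding and subtracting these yields the clean formulas $x_y=\frac1{2\kappa}(\phi+\phi^{-1})>0$ (so the hodograph is invertible) and $\partial_y\ln\frac{\tilde f\tilde g}{fg}=\frac1{2\kappa}(\phi-\phi^{-1})$. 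Combining the latter with the definition (\ref{u}) of $u$ gives, after one chain-rule step, the compact relation $u_x=\frac1{2\mathrm i\kappa}\partial_\tau\ln\phi$, whence $u+u_x=\frac1{\mathrm i\kappa}\partial_\tau\ln\frac{\tilde f}{f}$ and $u-u_x=\frac1{\mathrm i\kappa}\partial_\tau\ln\frac{\tilde g}{g}$, so that $u^2-u_x^2=-\frac1{\kappa^2}\left(\partial_\tau\ln\frac{\tilde f}{f}\right)\left(\partial_\tau\ln\frac{\tilde g}{g}\right)$. Reality of $u$, needed to interpret it as the real solution of (\ref{mch}), follows here from the period-2 reduction (\ref{red})/(\ref{blue}), which makes $\tilde f,\tilde g$ constant multiples of $\bar f,\bar g$.

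The decisive and most laborious step is to compute $m=u-u_{xx}$ in tau-function form. Applying $\partial_x=\frac1{x_y}\partial_y$ a second time to $u_x$ produces the individual mixed derivatives $\partial_y\partial_\tau\ln f,\ \partial_y\partial_\tau\ln\tilde f,\dots$ that equations (\ref{1})--(\ref{2}) alone cannot supply; these are exactly what the ``$D_\tau D_y$'' equations (\ref{3})--(\ref{4}) encode. The plan is to divide (\ref{3}) by $f\tilde g$ and (\ref{4}) by $\tilde f g$, to use the Hirota identity $\frac{D_\tau D_y F\cdot G}{FG}=\partial_\tau\partial_y\ln(FG)+\big(\partial_\tau\ln\frac{F}{G}\big)\big(\partial_y\ln\frac{F}{G}\big)$ together with the quantities already found above, and thereby to reduce $u-u_{xx}$ to an explicit rational expression in $\phi$ and $x_\tau$. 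I expect this computation---threading a second spatial derivative through the $\phi$-dependent metric and eliminating $x_\tau$ via (\ref{3})--(\ref{4})---to be the main obstacle, since it is where all four bilinear equations enter simultaneously and where the bookkeeping is heaviest.

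Finally, with $m$, $u^2-u_x^2$ and $x_\tau$ all written through $y$- and $\tau$-derivatives of the tau functions, I would recast (\ref{mch}) in conservation form $m_t+\partial_x\big[(u^2-u_x^2)m+2\kappa^2u\big]=0$ and convert it with $\partial_t=\partial_\tau-\frac{x_\tau}{x_y}\partial_y$, $\partial_x=\frac1{x_y}\partial_y$ into the equivalent $(y,\tau)$-identity $\partial_\tau(m\,x_y)+\partial_y\big[(u^2-u_x^2)m+2\kappa^2u-m\,x_\tau\big]=0$, using $x_{y\tau}=x_{\tau y}$. Substituting the tau-function expressions and repeatedly invoking (\ref{1})--(\ref{4}) should make the left-hand side vanish identically, which establishes that $u(x,t)$ defined by (\ref{u}) solves the mCH equation (\ref{mch}) and completes the proof.
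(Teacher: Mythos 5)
Your proposal is correct and follows essentially the same route as the paper's proof in Appendix B: you recast the two $D_y$ bilinear equations into logarithmic-derivative form to get the hodograph Jacobian $x_y$ (your $\frac{1}{2\kappa}(\phi+\phi^{-1})$ is the paper's $\frac{1}{\kappa}\cos\phi$, since your ratio variable equals $e^{\mathrm{i}\phi}$ in the paper's angle notation), then use the two $D_\tau D_y$ equations through the Hirota quotient identities to produce $m=u-u_{xx}$, and close with the compatibility $x_{y\tau}=x_{\tau y}$. Your single conservation-law identity $\partial_\tau(m\,x_y)+\partial_y\bigl[(u^2-u_x^2)m+2\kappa^2u-m\,x_\tau\bigr]=0$ is precisely the paper's pair of relations $r_\tau+2r^2mu_y=0$ and $x_\tau=u^2-u_x^2$ repackaged (note $m\,x_y=\sin\phi$, so it reduces to the paper's $(\sin\phi)_\tau+2\kappa^2u_y=0$), so the two arguments coincide up to bookkeeping.
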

Although the proof is given in Ref. \cite{mats}, we outline the proof in Appendix \ref{pro-bl} for the purpose to mimic the process in constructing the semi-discretization of the mCH equation.


 Thus we are able to express the N-soliton solution of mCH equation (\ref{mch}) through the following theorem.
\begin{theo}
	The modified Camassa-Holm equation (\ref{mch}) admits a solution
	\begin{align*}
	& u=\frac{1}{2\mathrm{i}\kappa}\left(\ln \frac{\tilde{f}\tilde{g}}{fg}\right)_\tau,\\	
	&x=\frac{y}{\kappa}+\ln\frac{g\tilde{g}}{f\tilde{f}}+d,\quad  t=\tau\,,
	\end{align*}
	where $f,\tilde{f},g,\tilde{g}$ are the determinants given by (\ref{t}) and $\tau_{n,k}$ can be written either as a Gram-type determinant
	\begin{equation}
		\tau_{n,k}=\left| m^{n,k}_{ij}\right|=\left| \mathrm{i} \delta_{ij}+
		\left(-\frac{{p}_{i}}{{p}_{j}}\right)^{-n}\left(-\frac{{p}_{i}+\frac{1}{2\kappa}}{{p}_{j}-\frac{1}{2\kappa}}\right)^{-k} e^{\xi_i+\bar{\xi}_j} \right|,
		\label{Gram-sol}
\end{equation}
with
\begin{align*}
	\xi_i=p_iy+\frac{\kappa^2}{2\kappa p_i-1}\tau+\xi_{i0},\,\ \bar{\xi}_i=p_iy+\frac{\kappa^2}{2\kappa p_i+1}\tau+\bar{\xi}_{i0}\,,
\end{align*}	
	or a Casorati-type determinant
\begin{equation}
\tau_{n,k}=\left|\begin{array}{cccc}
f_{n}^{(1)}(k) & f_{n+1}^{(1)}(k) & \cdots & f_{n+N-1}^{(1)}(k) \\
f_{n}^{(2)}(k) & f_{n+1}^{(2)}(k) & \cdots & f_{n+N-1}^{(2)}(k) \\
\cdots & \cdots & \cdots & \ldots \\
f_{n}^{(N)}(k) & f_{n+1}^{(N)}(k) & \cdots & f_{n+N-1}^{(N)}(k)
\end{array}\right|,
\label{sol1}
\end{equation}
	with
	\begin{align}
	&f_{n}^{(i)}(k)=p_i^n(p_i-\frac{1}{2\kappa})^ke^{\xi_i}+\mathrm{i} (-p_i)^n(-p_i-\frac{1}{2\kappa})^ke^{\eta_i},\\
	&\xi_i=p_iy+\frac{\kappa^2}{2\kappa p_i-1}\tau+\xi_{i0},\,\ \eta_i=-p_iy-\frac{\kappa^2}{2\kappa p_i+1}\tau+\eta_{i0}\,.
	\end{align}
\end{theo}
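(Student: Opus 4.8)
The plan is to assemble the theorem from the three preceding lemmas together with the period-2 reduction (\ref{red})--(\ref{blue}), so that the only genuinely new work is checking that this reduction is consistent and tracking how it acts on the determinant data. Concretely, I would proceed in four steps: first verify that the constraints (\ref{red}) (resp.\ (\ref{blue})) force the periodicity $\tau_{n,k}\Bumpeq\tau_{n+2,k}$ and the conjugation $\tau_{n,k}\Bumpeq\bar\tau_{n+1,k}$; then substitute the variable changes $y=x_1$, $x_{-1}=\frac{\kappa}{2}\tau$, $a=\frac{1}{2\kappa}$ and specialize the master equations (\ref{BL1})--(\ref{BL2}) to recover the four reduced bilinear equations (\ref{1})--(\ref{4}); next invoke the preceding lemma that passes from (\ref{1})--(\ref{4}) to the mCH equation via the hodograph (\ref{hodo}) and dependent-variable (\ref{u}) transformations; and finally read off the explicit determinants (\ref{Gram-sol}) and (\ref{sol1}) by inserting the reduced parameters into (\ref{Gram}) and (\ref{Casorati}).

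For the reduction I would work with the Casorati entries $f_{n}^{(i)}(k)=c_ip_i^n(p_i-a)^ke^{\xi_i}+d_iq_i^n(q_i-a)^ke^{\eta_i}$. Imposing $q_i=-p_i$ gives $q_i^2=p_i^2$, so $f_{m+2}^{(i)}=p_i^2 f_{m}^{(i)}$; shifting the Casorati columns $n\to n+2$ then multiplies row $i$ by $p_i^2$, whence $\tau_{n+2,k}=\big(\prod_i p_i^2\big)\tau_{n,k}$, i.e.\ the period-2 relation. Imposing in addition $d_i=\mathrm{i}c_i$ and taking $p_i,a,c_i$ and the flow variables real, a short computation gives $\overline{f_{m+1}^{(i)}(k)}=p_i f_{m}^{(i)}(k)$, which yields $\bar\tau_{n+1,k}\Bumpeq\tau_{n,k}$; the Gram reduction (\ref{blue}) produces the same two symmetries. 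Together these collapse the doubly-infinite index $n$ to $n\in\{0,1\}$ and identify $\tilde f\Bumpeq\bar f$ and $\tilde g\Bumpeq\bar g$, so that (\ref{t}) defines exactly four independent tau functions carrying a built-in complex-conjugate structure, which is what makes $u$ in (\ref{u}) real and the hodograph (\ref{hodo}) well defined.

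With the symmetries in hand, I would evaluate (\ref{BL1}) at $(n,k)=(0,0)$ and $(1,0)$ and (\ref{BL2}) likewise, using $\tau_{2,k}\Bumpeq\tau_{0,k}$ to fold back the shifted indices; after substituting $D_{x_1}=D_y$, $D_{x_{-1}}=\frac{2}{\kappa}D_\tau$ and $a=\frac{1}{2\kappa}$ and invoking the antisymmetry of Hirota derivatives, these four specializations reproduce (\ref{1})--(\ref{4}) exactly. For instance, (\ref{BL1}) at $(0,0)$ reads $(D_y+\frac{1}{2\kappa})g\cdot\tilde f=\frac{1}{2\kappa}\tilde g f$, which rearranges to $D_y\tilde f\cdot g-\frac{1}{2\kappa}(\tilde f g-f\tilde g)=0$, namely (\ref{2}). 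The closing step is purely a matter of substituting $a=\frac{1}{2\kappa}$, $x_{-1}=\frac{\kappa}{2}\tau$ together with $q_j=p_j,\ c_{ij}=\mathrm{i}\delta_{ij}$ into (\ref{Gram}) to obtain (\ref{Gram-sol}), and $q_i=-p_i,\ d_i=\mathrm{i}c_i$ into (\ref{Casorati}) to obtain (\ref{sol1}), the common factor $c_i$ in each row contributing only an overall constant multiple.

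The main obstacle is the consistency of the period-2 reduction rather than any single calculation. One must confirm that the constraints are compatible with the full system (\ref{BL1})--(\ref{BL2}), that the constant multiples appearing in each $\Bumpeq$ cancel uniformly on both sides of every bilinear equation, so that the folded equations are genuinely (\ref{1})--(\ref{4}) and not merely proportional to them, and that the conjugation symmetry is precisely the one needed for reality of $u$. A secondary subtlety is verifying that the Gram-type and Casorati-type reductions yield equivalent tau functions, so that the two representations in the theorem describe the same $N$-soliton family; this is where I expect the bookkeeping to be most delicate.
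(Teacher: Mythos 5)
Your proposal is correct and follows essentially the same route as the paper: the theorem there is assembled exactly as you describe, from the determinant solutions of (\ref{BL1})--(\ref{BL2}) (Lemmas on Gram- and Casorati-type solutions), the period-2 reduction (\ref{red})/(\ref{blue}) yielding $\tau_{n+2,k}\Bumpeq\tau_{n,k}$ and $\bar{\tau}_{n+1,k}\Bumpeq\tau_{n,k}$, the substitutions $y=x_1$, $x_{-1}=\frac{\kappa}{2}\tau$, $a=\frac{1}{2\kappa}$ folding (\ref{BL1})--(\ref{BL2}) into (\ref{1})--(\ref{4}), and finally the hodograph/dependent-variable lemma passing to the mCH equation. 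Your explicit checks (row scaling by $p_i^2$, the conjugation relation $\overline{f_{n+1}^{(i)}}=p_if_n^{(i)}$, and the uniform cancellation of the proportionality constants in the bilinear equations) are exactly the verifications the paper leaves implicit, so there is no gap.
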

\begin{remark}
The Casorati determinant solution (\ref{sol1}) is consistent with the solutions given in Ref. \cite{mats}. Same as many other soliton equations, the Gram-type determinant and the Casorati-type determinant solutions are basically equivalent to each other.
\end{remark}
\section{Integrable semi-discrete mCH  equation}\label{sec3}
In this section, we intend to construct the integrable spatial-discretization of the mCH equation. To this end, we first derive semi-discrete analogues of a set of bilinear equations (\ref{1})--(\ref{4})
in subsection \ref{sub-sec-3-1}. Then in subsection \ref{sub-sec-3-2}, we construct an integrable semi-discrete mCH equation.
\subsection{From discrete KP equation to the semi-discrete analogue of  (\ref{BL1}) and (\ref{BL2})}\label{sub-sec-3-1}
Firstly, we attempt to find semi-discrete analogues of (\ref{BL1}) and (\ref{BL2}). Amazingly, these semi-discrete equations can be obtained as an intermediate product in deriving bilinear equations of mCH equation (\ref{mch}). The result is outlined by the following lemma.
\begin{lem}
The discrete KP equation (\ref{H-M-1}) generates the following bilinear equations
\begin{align}
&\frac{1}{b}[\tau_{n,k+1}(l)\tau_{n+1,k}(l+1)-\tau_{n,k+1}(l+1)\tau_{n+1,k}(l)]\nonumber \\
&-a[\tau_{n,k}(l+1)\tau_{n+1,k+1}(l)-\tau_{n+1,k}(l)\tau_{n,k+1}(l+1)]=0,\label{dis1}\\
&\frac{1}{b}D_{x_{-1}}[\tau_{n,k+1}(l)\cdot\tau_{n+1,k}(l+1)-\tau_{n,k+1}(l+1) \cdot\tau_{n+1,k}(l)]\nonumber\\
&=(aD_{x_{-1}}-2)[\tau_{n+1,k+1}(l)\cdot \tau_{n,k}(l+1)-\tau_{n,k+1}(l+1)\cdot \tau_{n+1,k}(l)],\label{dis2}
\end{align}
which admit the determinant solution of Gram-type
\begin{equation}
		\tau_{n,k}(l)=\left| m^{n,k}_{ij}\right|=\left| c_{ij}+ \left(-\frac{{p}_{i}}{{q}_{j}}\right)^{-n}\left(-\frac{{p}_{i}+a}{{q}_{j}-a}\right)^{-k}  \left( \frac{1-b p_{i}}{%
1+b q_{j}}\right) ^{-l}  e^{\xi_i+\eta_j} \right|,
		\label{Gram1}
\end{equation}
where
\begin{align*}
	\xi_i=\frac{1}{p_i+a}x_{-1}+\xi_{i0},\,\ \eta_i=\frac{1}{q_i-a}x_{-1}+\eta_{i0}.
\end{align*}
\end{lem}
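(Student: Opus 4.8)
The plan is to recycle, almost verbatim, the derivation that produced (\ref{BL1})--(\ref{BL2}), but to halt the continualization one step earlier. Recall that in that derivation one starts from the reparametrized Hirota--Miwa equation (\ref{full-discrete}) in the three discrete variables $n,k,l$ with parameters $a,b,a_3$, and then sends \emph{both} $b\to0$ (turning the $b$-conjugate variable into the continuous flow $x_1$) and $a_3\to0$ (turning the $a_3$-conjugate variable into $x_{-1}$), reading off (\ref{BL1}) and (\ref{BL2}) at orders $b^0a_3^0$ and $b^0a_3^1$. Here I would instead keep $b$ finite, so that the lattice variable $l$ survives as a genuine discrete direction through the exact factor $\left(\frac{1-bp_i}{1+bq_j}\right)^{-l}$, and continualize only the $a_3$-direction via the Miwa transformation $x_{-1}=la_3,\ x_{-2}=\tfrac12 la_3^2,\dots$. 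In this way the continuous derivative $D_{x_1}$ of (\ref{BL1})--(\ref{BL2}) is replaced by a discrete difference in $l$ carrying the prefactor $1/b$, which is exactly the structure visible in (\ref{dis1})--(\ref{dis2}).

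Concretely, I would expand the resulting identity as a power series in $a_3$ alone, using the elementary Schur polynomials $p_M(\tfrac12\widetilde{D}_-)$ exactly as before, while treating the $b$- and $l$-dependence exactly (no expansion in $b$). Collecting the coefficient of $a_3^0$ should produce the purely discrete three-term relation (\ref{dis1}), in which the $1/b$ and the $a$ prefactors descend directly from the coefficients $b^{-1}$, $a-a_3$ and $a_3-b^{-1}-a$ appearing in (\ref{full-discrete}); collecting the coefficient of $a_3^1$ should produce (\ref{dis2}), the operator $D_{x_{-1}}$ entering because the first negative flow is precisely the $O(a_3)$ term of the Miwa expansion. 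As a built-in consistency check I would then let $b\to0$: a short Taylor expansion $\tau(l\pm1)=\tau\pm b\,\partial_{x_1}\tau+\cdots$ converts the $l$-differences back into $D_{x_1}$ and returns (\ref{BL1})--(\ref{BL2}), confirming that (\ref{dis1})--(\ref{dis2}) are the correct semi-discretizations.

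For the determinant claim, the quickest route is inheritance: the entry in (\ref{Gram1}) is exactly the discrete-KP Gram entry (\ref{HW-Gram}) under the same substitutions ${p}_i-a_1=\widetilde{p}_i$, ${q}_j+a_1=\widetilde{q}_j$, $a_2-a_1=b^{-1}$, $a=a_1$, with only the $a_3$-direction continualized. Since the Gram determinant (\ref{Gram}) already solves the parent Hirota--Miwa system, and (\ref{dis1})--(\ref{dis2}) are nothing but consequences of that system at fixed $b$, the determinant (\ref{Gram1}) solves them automatically. To make the statement self-contained I would in addition verify (\ref{dis1})--(\ref{dis2}) directly, using the shift and $x_{-1}$-derivative rules for the entries $m^{n,k}_{ij}(l)$ together with the Jacobi/Pl\"ucker (bordered-determinant) identities, in the same spirit as the Casoratian verification deferred to Appendix~A. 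I expect the main obstacle to be the mixed discrete--continuous bookkeeping: one must keep the three simultaneous shifts in $n,k,l$ exact while expanding only in $a_3$, and then check that the constants $1/b$, $a$ and the lone factor $2$ in (\ref{dis2}) emerge with precisely the right signs, a place where the accumulated sign and index conventions of the Miwa reduction are easy to mismanage.
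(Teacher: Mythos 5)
Your proposal follows essentially the same route as the paper's own proof: the paper derives (\ref{dis1}) and (\ref{dis2}) by applying the Miwa transformation to (\ref{full-discrete}) in the $a_3$-direction only, keeping $b$ finite so that the parameter-$b$ lattice direction survives as the discrete index $l$, and then reads off the two equations at orders $a_3^0$ and $a_3^1$ of the Schur-polynomial expansion in $\widetilde{D}_-$, exactly as you describe, with the Gram solution (\ref{Gram1}) inherited from the discrete KP Gram determinant under this partial limit. The only discrepancy is a harmless relabeling: in this lemma the paper sets $k_2=l$ (the $b$-direction), so the Miwa variable is $x_{-1}=k_3a_3$ attached to the $k_3$-lattice whose discrete remnant is $k$, rather than $x_{-1}=la_3$ as written in your sketch.
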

\begin{proof}
We apply the Miwa transformation to Eq. (\ref{full-discrete})  by taking $a_3 \rightarrow 0$ but leaving $b$ finite, then we have
\begin{align*}
&\sum\left(a_3-b^{-1}-a\right) a_3^{M} p_{M}\left(\frac{1}{2} \widetilde{D}_{-}\right) \tau_{n}(k+1, l+1) \cdot \tau_{n+1}(k, l) \\
&+\sum(a-a_3) d^{M} p_{M}\left(-\frac{1}{2} \widetilde{D}_{-}\right) \tau_{n}(k, l+1) \cdot \tau_{n+1}(k+1, l) \\
&+\sum b^{-1} a_3^{M} p_{M}\left(\frac{1}{2} \widetilde{D}_{-}\right) \tau_{n}(k+1, l) \cdot \tau_{n+1}(k, l+1)=0,
\end{align*}
by letting $k_{2}=l$.
At the order of $b^{0} a_3^{0}$, we have
\begin{align*}
\left(-\frac{1}{b}-a\right) \tau_{n}(k+1, l+1) \tau_{n+1}(k, l)+a \tau_{n}(k, l+1) \tau_{n+1}(k+1, l)+\frac{1}{b} \tau_{n}(k+1, l) \tau_{n+1}(k, l+1)=0,
\end{align*}
which is actually Eq. (\ref{dis1}).
At the order of $a_3^{1}$, we have
\begin{align*}
&\left(1-\left(b^{-1}+a\right) p_{1}\left(\frac{1}{2} \widetilde{D}_{-}\right)\right) \tau_{n}(k+1, l+1) \cdot \tau_{n+1}(k, l) \\
&+\left(a p_{1}\left(-\frac{1}{2} \widetilde{D}_{-}\right)-1\right) \tau_{n}(k, l+1) \cdot \tau_{n+1}(k+1, l)+\frac{1}{b} p_{1}\left(\frac{1}{2} \widetilde{D}_{-}\right) \tau_{n}(k+1, l) \cdot \tau_{n+1}(k, l+1)=0,
\end{align*}
which is nothing but Eq. (\ref{dis2}).
\end{proof}
The following lemma gives the Casorati-type determinant solution of bilinear equations (\ref{dis1}) and (\ref{dis2}).
\begin{lem}
	Bilinear equations (\ref{dis1}) and (\ref{dis2}) admit the following determinant solutions
	\begin{equation}
	\tau_{n,k}(l)=\left|\begin{array}{cccc}
	f_{n,k}^{(1)}(l) & f_{n+1,k}^{(1)}(l) & \cdots & f_{n+N-1,k}^{(1)}(l) \\
	f_{n,k}^{(2)}(l) & f_{n+1,k}^{(2)}(l) & \cdots & f_{n+N-1,k}^{(2)}(l) \\
	\cdots & \cdots & \cdots & \ldots \\
	f_{n,k}^{(N)}(l) & f_{n+1,k}^{(N)}(l) & \cdots & f_{n+N-1,k}^{(N)}(l)
	\end{array}\right|,
	\end{equation}
	where
	\begin{align}
	&f_{n,k}^{(i)}(l)=c_{i}p_i^n(p_i-a)^k(1-bp_i)^le^{\xi_i}+d_iq_i^n(q_i-a)^k(1-bq_i)^le^{\eta_i},\\
	&\xi_i=\frac{1}{p_i-a}x_{-1}+\xi_{i0},\,\ \eta_i=\frac{1}{q_i-a}x_{-1}+\eta_{i0}.
	\end{align}
	Here $c_{i},d_i,p_i,q_i,\xi_{i0},\eta_{i0}$ are the arbitrary parameters.
\end{lem}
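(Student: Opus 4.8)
The plan is to verify the claim directly by the Casoratian (Freeman--Nimmo) technique: first I would extract the linear contiguity relations obeyed by the entries $f^{(i)}_{n,k}(l)$, then translate the elementary lattice shifts and the $x_{-1}$-derivative into column operations on the Casorati determinant, and finally recognize each of (\ref{dis1}) and (\ref{dis2}) as a Pl\"ucker relation (equivalently, a Jacobi/Laplace identity) among maximal minors.

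First I would record the contiguity relations. Writing each entry as $f^{(i)}_{n,k}(l)=A_i+B_i$ with $A_i=c_ip_i^n(p_i-a)^k(1-bp_i)^le^{\xi_i}$ and $B_i=d_iq_i^n(q_i-a)^k(1-bq_i)^le^{\eta_i}$, the forward shift in $n$ multiplies $A_i$ and $B_i$ by $p_i$ and $q_i$ respectively, so that
\begin{align*}
&f^{(i)}_{n,k+1}(l)=f^{(i)}_{n+1,k}(l)-a\,f^{(i)}_{n,k}(l),\\
&f^{(i)}_{n,k}(l+1)=f^{(i)}_{n,k}(l)-b\,f^{(i)}_{n+1,k}(l),\\
&\partial_{x_{-1}}f^{(i)}_{n,k}(l)=f^{(i)}_{n,k-1}(l),
\end{align*}
the last following from $\partial_{x_{-1}}e^{\xi_i}=\frac{1}{p_i-a}e^{\xi_i}$ and $\partial_{x_{-1}}e^{\eta_i}=\frac{1}{q_i-a}e^{\eta_i}$. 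These three identities are the only analytic input; everything afterward is determinant algebra.

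Next I would introduce the column-vector notation $\mathbf{f}_m=(f^{(1)}_{m,k}(l),\dots,f^{(N)}_{m,k}(l))^{\mathrm T}$ at a reference level $(k,l)$, so that $\tau_{n,k}(l)=\det(\mathbf{f}_n,\mathbf{f}_{n+1},\dots,\mathbf{f}_{n+N-1})$. By the first two contiguity relations, raising $k$ by one replaces every column $\mathbf{f}_m$ by $\mathbf{f}_{m+1}-a\mathbf{f}_m$, while raising $l$ by one replaces $\mathbf{f}_m$ by $\mathbf{f}_m-b\mathbf{f}_{m+1}$; since each is a fixed combination of neighbouring columns applied uniformly, every $\tau$-function appearing in (\ref{dis1})--(\ref{dis2}) becomes a determinant whose columns are drawn from the single sequence $\{\mathbf{f}_m\}$. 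This reduces the two lattice directions $k$ and $l$ to ordinary column operations and places all the determinants on a common footing. Observing that the two products $\tau_{n,k+1}(l+1)\tau_{n+1,k}(l)$ in (\ref{dis1}) combine, (\ref{dis1}) collapses to a three-term bilinear identity, which I would then prove as the Pl\"ucker relation
\[
[\alpha\,\beta][\gamma\,\delta]-[\alpha\,\gamma][\beta\,\delta]+[\alpha\,\delta][\beta\,\gamma]=0
\]
for maximal minors sharing a common $(N-1)$-column block, matching the three bracket products against the three $\tau$-products once the factors $\tfrac1b$ and $a$ produced by the column operations are collected.

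The hard part will be (\ref{dis2}), because of the Hirota derivative $D_{x_{-1}}$. Differentiating a Casorati determinant yields a sum of determinants each with one column differentiated, and by the third contiguity relation that column becomes a backward $k$-shift ($k\mapsto k-1$); hence the $D_{x_{-1}}$-terms unfold into determinants living on mixed $k$-levels. Organizing this collection so that it collapses onto a single Jacobi identity --- typically by passing to a bordered $(N+1)\times(N+1)$ determinant carrying one derivative-generated column --- is the delicate bookkeeping, and I expect essentially all of the work to concentrate there, the algebraic identity (\ref{dis1}) and the contiguity relations being comparatively immediate. Finally, since (\ref{dis1})--(\ref{dis2}) are the same bilinear equations whose Gram-type solution was established in the preceding lemma, a successful verification doubles as a consistency check between the two determinant representations.
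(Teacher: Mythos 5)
Your skeleton --- contiguity relations for the entries, then Pl\"ucker relations among maximal minors of a Casoratian --- is exactly the route of the paper's Appendix~\ref{pro}, and your three contiguity relations are precisely the ones recorded there. But there is a genuine gap in the bridge between the two steps. Your claim that ``every $\tau$-function appearing in (\ref{dis1})--(\ref{dis2}) becomes a determinant whose columns are drawn from the single sequence $\{\mathbf{f}_m\}$'' is false as stated: applying the $l$-shift to all columns and expanding by multilinearity gives
\begin{equation*}
\tau_{n,k}(l+1)=\sum_{j=0}^{N}(-b)^{N-j}\det\bigl(\mathbf{f}_n,\dots,\mathbf{f}_{n+j-1},\mathbf{f}_{n+j+1},\dots,\mathbf{f}_{n+N}\bigr),
\end{equation*}
a sum of $N+1$ Casoratians, not a single one (and similarly for the $k$-shift). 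Since a Pl\"ucker relation needs the competing products to be maximal minors chosen from one common pool of columns, the real technical content of the proof is the device you skip: column-reducing each shifted $\tau$ to a \emph{mixed-level} representation in which all columns but one sit at level $(k+1,l+1)$, e.g. $\tau_{n,k+1}(l)=|0_{k+1}(l+1),\cdots,N-2_{k+1}(l+1),N-1_{k+1}(l)|$, together with identities such as
\begin{equation*}
|1_k(l+1),1_{k+1}(l+1),\cdots,N-2_{k+1}(l+1),N-1_{k+1}(l)|=(1-ab)\,\tau_{n+1,k}(l),
\end{equation*}
which is where the coefficients $\tfrac1b$ and $a$ in (\ref{dis1}) actually come from. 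Without this reduction, your ``three-term Pl\"ucker relation'' for (\ref{dis1}) never materializes, because the six determinants in play are not minors of a common matrix in your framing.

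The second gap is that the verification of (\ref{dis2}) --- which you yourself identify as where ``essentially all of the work'' lies --- is deferred rather than performed, and the deferral hides a real discrepancy with what actually happens. Differentiation by $x_{-1}$ spreads over all columns, each differentiated column dropping to level $k-1$ via $\partial_{x_{-1}}f^{(i)}_{n,k}=f^{(i)}_{n,k-1}$, and in the paper the resulting sums are not absorbed into a single bordered-determinant Jacobi identity: they are cancelled by a family of Pl\"ucker relations matched term by term against the derivative terms coming from the right-hand side of (\ref{dis2}), plus telescoping column identities applied $N-2$ times in succession to show that the two leftover terms coincide. That web of identities occupies most of Appendix~\ref{pro}; a proposal that names the correct framework but executes neither the common-pool reduction nor any of this bookkeeping has not established the lemma.
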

The proof is presented in Appendix \ref{pro}. In order to realize the periodic-2 reduction in the discrete version, we  introduce an auxiliary index
$m$ so that
\begin{align}
&\frac{1}{b}\left[\tau_{n+1, k}(l,m) \tau_{n, k+1}(l+1,m)-\tau_{n, k+1}(l,m) \tau_{n+1, k}(l+1)\right] \nonumber \\
&=a\left(\tau_{n+1, k}(l,m) \tau_{n, k+1}(l+1,m)-\tau_{n, k}(l+1,m) \tau_{n+1, k+1}(l,m)\right), \label{d1} \\
&\frac{1}{b} D_{x_{-1}}\left[\tau_{n+1, k}(l,m) \cdot \tau_{n, k+1}(l+1,m)-\tau_{n+1, k}(l+1,m) \cdot \tau_{n, k+1}(l,m)\right] \nonumber \\
&=\left(a D_{x_{-1}}-2\right)\left(\tau_{n+1, k+1}(l,m) \cdot \tau_{n, k}(l+1,m)-\tau_{n, k+1}(l+1,m) \cdot \tau_{n+1, k}(l,m)\right), \label{d2}
\end{align}
and a set of parallel bilinear equations with parameter constant $c$ and a shift in $m$
\begin{align}
&\frac{1}{c}\left[\tau_{n+1, k}(l, m) \tau_{n, k+1}(l, m+1)-\tau_{n, k+1}(l, m) \tau_{n+1, k}(l, m+1)\right] \nonumber\\
&=a\left(\tau_{n+1, k}(l, m) \tau_{n, k+1}(l, m+1)-\tau_{n, k}(l, m+1) \tau_{n+1, k+1}(l, m)\right), \label{d3}\\
&\frac{1}{c} D_{x_{-1}}\left[\tau_{n+1, k}(l, m) \cdot \tau_{n, k+1}(l, m+1)-\tau_{n+1, k}(l, m+1) \cdot \tau_{n, k+1}(l, m)\right]\nonumber\\
&=\left(a D_{x_{-1}}-2\right)\left(\tau_{n+1, k+1}(l, m) \cdot \tau_{n, k}(l, m+1)-\tau_{n, k+1}(l, m+1) \cdot \tau_{n+1, k}(l, m)\right)\,. \label{d4}
\end{align}
The solution to bilinear equations (\ref{d1})--(\ref{d4}) can be expressed as the Casorati-type determinant
\begin{equation}
\tau_{n,k}(l,m)=\left|\begin{array}{cccc}
f_{n,k}^{(1)}(l,m) & f_{n+1,k}^{(1)}(l,m) & \cdots & f_{n+N-1,k}^{(1)}(l,m) \\
f_{n,k}^{(2)}(l,m) & f_{n+1,k}^{(2)}(l,m) & \cdots & f_{n+N-1,k}^{(2)}(l,m) \\
\cdots & \cdots & \cdots & \ldots \\
f_{n,k}^{(N)}(l,m) & f_{n+1,k}^{(N)}(l,m) & \cdots & f_{n+N-1,k}^{(N)}(l,m)
\end{array}\right|\, ,
\end{equation}
where
\begin{align}
&f_{n,k}^{(i)}(l,m)=c_{i}p_i^n(p_i-a)^k(1-bp_i)^l(1-cp_i)^me^{\xi_i}+d_iq_i^n(q_i-a)^k(1-bq_i)^l(1-cq_i)^me^{\eta_i},\\
&\xi_i=\frac{1}{p_i-a}x_{-1}+\xi_{i0},\,\ \eta_i=\frac{1}{q_i-a}x_{-1}+\eta_{i0}\,,
\end{align}
or Gram-type determinant
\begin{equation}
		\tau_{n,k}(l)=\left| c_{ij}+ \left(-\frac{{p}_{i}}{{q}_{j}}\right)^{-n}\left(-\frac{{p}_{i}+a}{{q}_{j}-a}\right)^{-k}  \left( \frac{1-b p_{i}}{%
1+b q_{j}}\right) ^{-l} \left( \frac{1-c p_{i}}{%
1+c q_{j}}\right) ^{-m}  e^{\xi_i+\eta_j} \right|\,,
		\label{Gram2}
\end{equation}
where
\begin{align}
&\xi_i=\frac{1}{p_i-a}x_{-1}+\xi_{i0},\,\ \eta_i=\frac{1}{q_i+a}x_{-1}+\eta_{i0}\,.
\end{align}
To realize period-2 reduction in both the continuous and discrete cases simultaneously, we impose constraints
\begin{align*}
b=-c\,, \quad q_{i}=-p_{i}\,, i=1, \cdots, N,
\end{align*}
for Casorati-type solution
or
\begin{align*}
b=-c\,, \quad q_{i}=p_{i}\,, i=1, \cdots, N,
\end{align*}
for Gram-type solution.  Under these conditions, we have
\begin{align}
&\tau_{n,k}(l,m) \Bumpeq\tau_{n+2,k}(l,m), \ \
\tau_{n,k} (l,m) \Bumpeq\bar{\tau}_{n+1,k} (l,m), \ \
\tau_{n,k} (l+1,m+1) \Bumpeq {\tau}_{n,k} (l,m)\,.
\end{align}
Under this reduction, we can drop index $m$ and define
\begin{align}
f_{l}=\tau_{00 l}, \quad \tilde{f}_{l}=\tau_{10 l}, \quad g_{l}=\tau_{01 l}, \quad \tilde{g}_{l}=\tau_{11 l}.
\end{align}

Same as previous section, $\tilde{f}_l$ and $\tilde{g}_l$ can be set to be complex conjugate of $f_l$ and $g_l$, respectively. Furthermore,  let $x_{-1}=\frac{\kappa}{2} \tau$, $a=\frac{1}{2\kappa}$ and take $n=1, k=0$, we have
\begin{align}
&\frac{1}{b}\left(f_{l} \tilde{g}_{l+1}-f_{l+1} \tilde{g}_{l}\right)=\frac{1}{2 \kappa}\left(f_{l} \tilde{g}_{l+1}-g_{l} \tilde{f}_{l+1}\right),\label{dis1'} \\
&-\frac{1}{b}\left(f_{l} \tilde{g}_{l-1}-f_{l-1} \tilde{g}_{l}\right)=\frac{1}{2 \kappa}\left(f_{l} \tilde{g}_{l-1}-g_{l} \tilde{f}_{l-1}\right),\label{dis2'} \\
&\frac{1}{b} D_{\tau }\left(f_{l} \cdot \tilde{g}_{l+1}-f_{l+1} \cdot \tilde{g}_{l}\right)=\left(\frac{1}{2 \kappa } D_{x_{-1}}- \kappa \right)\left(g_{l} \cdot \tilde{f}_{l+1}-\tilde{g}_{l+1} \cdot f_{l}\right), \label{dis3'}\\
&-\frac{1}{b} D_{\tau}\left(f_{l} \cdot \tilde{g}_{l-1}-f_{l-1} \cdot \tilde{g}_{l}\right)=\left(\frac{1}{2\kappa} D_{\tau}- \kappa\right)\left(g_{l} \cdot \tilde{f}_{l-1}-\tilde{g}_{l-1} \cdot f_{l}\right). \label{dis4'}
\end{align}
For $n=0, k=0$, we have
\begin{align}
&\frac{1}{b}\left(\tilde{f}_{l} g_{l+1}-\tilde{f}_{l+1} g_{l}\right)=\frac{1}{2 \kappa}\left(\tilde{f}_{l} g_{l+1}-\tilde{g}_{l} f_{l+1}\right), \label{dis1''}\\
&-\frac{1}{b}\left(\tilde{f}_{l} g_{l-1}-\tilde{f}_{l-1} g_{l}\right)=\frac{1}{2 \kappa}\left(\tilde{f}_{l} g_{l-1}-\tilde{g}_{l} f_{l-1}\right),\label{dis2''} \\
&\frac{1}{b} D_{\tau}\left(\tilde{f}_{l} \cdot g_{l+1}-\tilde{f}_{l+1} \cdot g_{l}\right)=\left(\frac{1}{2 \kappa} D_{\tau}-\kappa \right)\left(\tilde{g}_{l} \cdot f_{l+1}-g_{l+1} \cdot \tilde{f}_{l}\right),\label{dis3''} \\
&-\frac{1}{b} D_{\tau}\left(\tilde{f}_{l} \cdot g_{l-1}-f_{l-1} \cdot \tilde{g}_{l}\right)=\left(\frac{1}{2 \kappa} D_{\tau}-\kappa \right)\left(\tilde{g}_{l} \cdot f_{l-1}-g_{l-1} \cdot \tilde{f}_{l}\right).\label{dis4''}
\end{align}
\begin{remark}
The bilinear equations (\ref{dis1'})--(\ref{dis4''}) are discrete analogues of the set of bilinear equations (\ref{1})--(\ref{4}), which constitute a set of crucial bilinear equations for us to construct integrable semi-discretization of the mCH equation (\ref{mch}).
\end{remark}
\subsection{Integrable semi-discrete mCH eqaution}\label{sub-sec-3-2}
We propose an integrable  semi-discrete mCH equation, the main result of the present paper, by the following theorem.
\begin{theo}
An integrable semi-discrete analogue of the mCH equation (\ref{mch}) is derived as
\begin{align}
&  \frac{d \delta_l}   {d t}   =2 m_l (\delta u_l),  \label{semi-mCH1} \\
& m_l =  \frac 12 (u_l+u_{l+1}) -  \Gamma_l  r_l \tilde{r}_l   (\delta^2 u_l)-  \Gamma_l  r_l (\delta \tilde{r}_{l}) (\delta \tilde{u}_{l}),   \label{semi-mCH2}
\end{align}
from Eqs. (\ref{dis1'})--(\ref{dis4''}), through a dependent variable transformation
\begin{align}
u_l=\frac{1}{2\mathrm{i} \kappa} \left(\ln\frac{\tilde{f}_l\tilde{g}_l}{f_lg_l}\right)_{\tau},
\end{align}
and a discrete hodograph transformation
\begin{align}
\delta_l=\frac{x_{l+1}-x_l} {b}=\frac{1}{r_{l}}=\frac{1}{\kappa} \cos \phi_{l}\,, \quad t=\tau\,.
\end{align}
Here $\cos \phi_{l}$ is defined from
\begin{align}
\frac{g_{l} \tilde{f}_{l+1}+g_{l+1} \tilde{f}_{l}}{f_{l} \tilde{g}_{l+1}+f_{l+1} \tilde{g}_{l}}=e^{\mathrm{i} \phi_{l}}, \quad \frac{\tilde{g}_{l} f_{l+1}+\tilde{g}_{l+1} f_{l}}{\tilde{f}_{l} g_{l+1}+\tilde{f}_{l+1} g_{l}}=e^{-\mathrm{i} \phi_{l}},
\end{align}
and other variables are defined by
\begin{align}
& m_{l}=\kappa\tan \phi_{l}, \quad \Delta \left( \ln\frac{{f}_l}{\tilde{g}_l}\right) = \frac{2}{b} \frac{f_{l} \tilde{g}_{l+1}-f_{l+1} \tilde{g}_{l}}{f_{l} \tilde{g}_{l+1}+f_{l+1} \tilde{g}_{l}}, \\
&\delta u_l=\frac{1}{2\mathrm{i} \kappa}\left(  \Delta \left( \ln\frac{\tilde{f}_l\tilde{g}_l}{f_lg_l}\right)\right)_{\tau}\,,\quad
\delta^2 u_l=\frac{\delta u_{l+1}-\delta u_l}{b},\\
& \tilde{r}_l  = \frac{\kappa}{\cos\frac{\phi_{l+1}+\phi_l}{2}}\,, \ \  \delta \tilde{r}_{l}  = \frac{2\kappa \tan\frac{\phi_{l+1}+\phi_l}{2} }{\cos\frac{\phi_{l+1}+\phi_l}{2}}  \frac{ \tan\frac{\phi_{l+1}-\phi_l}{2}}{b},\\
& \tilde{\phi}_l=\mathrm{i}\ln \frac{f_l\tilde{g}_l}{\tilde{f}_lg_l},\label{phi_l}\, \quad \quad \delta \tilde{u}_{l}  = - \frac{1}{2 \kappa^2} \left( \sin\frac{\phi_{l+1}+\phi_l}{2}\right)_\tau,
\end{align}
and
\begin{align}
\Gamma_l=\frac{\left({1}-\frac{b^2}{8 \kappa^2}(1-\cos\phi_l)\right)(\tilde{\phi}_{l+1}-\tilde{\phi}_{l})_{\tau}}{(\phi_{l+1}-\phi_l)_\tau}.
\end{align}
 \end{theo}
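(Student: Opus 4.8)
The plan is to mimic the continuous derivation outlined in Appendix~\ref{pro-bl}, with the $y$-derivative replaced by the forward difference in the lattice index $l$, and to verify the two equations (\ref{semi-mCH1}) and (\ref{semi-mCH2}) directly from the four families of bilinear equations (\ref{dis1'})--(\ref{dis4''}). A preliminary observation I would use repeatedly is that, under the period-2 reduction, $\tilde{f}_l$ and $\tilde{g}_l$ are the complex conjugates of $f_l$ and $g_l$; consequently $\phi_l$, $u_l$, $\delta_l$ and $m_l$ are real, and the two ratios defining $e^{\mathrm{i}\phi_l}$ and $e^{-\mathrm{i}\phi_l}$ are genuine complex conjugates, so that $\cos\phi_l$ and hence $\delta_l$ are well defined. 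I would first check the internal consistency of the discrete hodograph transformation: the spatial bilinear equations (\ref{dis1'}) and (\ref{dis1''}), together with their backward-shift partners (\ref{dis2'}) and (\ref{dis2''}), express the forward differences $f_l\tilde{g}_{l+1}-f_{l+1}\tilde{g}_l$ in terms of the off-diagonal products $g_l\tilde{f}_{l+1}-\tilde{g}_lf_{l+1}$, which is exactly what is needed to identify $\delta_l=(x_{l+1}-x_l)/b$ with $\kappa^{-1}\cos\phi_l$ through the definition of $\Delta\ln(f_l/\tilde{g}_l)$.

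For the evolution equation (\ref{semi-mCH1}) I would differentiate $\delta_l=\kappa^{-1}\cos\phi_l$ in $t=\tau$ to get $\dot{\delta}_l=-\kappa^{-1}\sin\phi_l\,(\phi_l)_\tau$, so the task reduces to computing $(\phi_l)_\tau$ from the tau functions. Taking the logarithmic $\tau$-derivative of $e^{\mathrm{i}\phi_l}=(g_l\tilde{f}_{l+1}+g_{l+1}\tilde{f}_l)/(f_l\tilde{g}_{l+1}+f_{l+1}\tilde{g}_l)$ produces $D_\tau$-acting bilinear combinations, which I would replace using the time bilinear equations (\ref{dis3'})--(\ref{dis4'}) and (\ref{dis3''})--(\ref{dis4''}); the right-hand sides of those equations involve $D_{x_{-1}}$ and the undifferentiated products, which the spatial equations already relate to $\cos\phi_l$. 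After collecting terms I expect the result to match $2m_l\,\delta u_l$ once $m_l=\kappa\tan\phi_l$ and the definition of $\delta u_l$ through $(\Delta\ln(\tilde{f}_l\tilde{g}_l/f_lg_l))_\tau$ are inserted.

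The substantive step is the discrete constitutive relation (\ref{semi-mCH2}), the lattice analogue of $m=u-u_{xx}$. Here I would begin from $m_l=\kappa\tan\phi_l$ and rewrite it using the half-shifted angles, since the auxiliary fields are built precisely from $\cos\tfrac{\phi_{l+1}+\phi_l}{2}$ and $\tan\tfrac{\phi_{l+1}-\phi_l}{2}$: $\tilde{r}_l$ is the reciprocal cosine, $\delta\tilde{r}_l$ its discrete derivative, and $\tilde{\phi}_l=\mathrm{i}\ln(f_l\tilde{g}_l/\tilde{f}_lg_l)$ the companion phase introduced in (\ref{phi_l}). The strategy is to express $\tfrac12(u_l+u_{l+1})-m_l$ as a second forward difference of $u$, then to use the bilinear equations to convert that second difference into the combination $\Gamma_lr_l\tilde{r}_l(\delta^2u_l)+\Gamma_lr_l(\delta\tilde{r}_l)(\delta\tilde{u}_l)$, with $\Gamma_l$ absorbing the Jacobian factor $1-\tfrac{b^2}{8\kappa^2}(1-\cos\phi_l)$ and the ratio $(\tilde{\phi}_{l+1}-\tilde{\phi}_l)_\tau/(\phi_{l+1}-\phi_l)_\tau$ that arises from comparing the evolution of the primal phase $\phi_l$ with that of the companion phase $\tilde{\phi}_l$.

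I expect this last step to be the main obstacle. The difficulty is not conceptual but organizational: one must track forward and backward shifts in $l$ simultaneously, apply angle-addition formulas to merge full- and half-shifted trigonometric quantities, and verify that the seemingly ad hoc factor $\Gamma_l$ is \emph{forced} by the bilinear equations rather than imposed by hand. I would manage this by establishing two intermediate identities first, one expressing $(\phi_{l+1}-\phi_l)_\tau$ and one expressing $(\tilde{\phi}_{l+1}-\tilde{\phi}_l)_\tau$ through the same family of tau-function products, so that their ratio, and hence $\Gamma_l$, is pinned down before the final substitution into (\ref{semi-mCH2}). The continuum-limit consistency as $b\to0$, where $\Gamma_l\to1$, $\tilde{r}_l\to r_l$, and the correction terms collapse onto the $u_{xx}$ contribution, serves as a useful check that the bookkeeping is correct.
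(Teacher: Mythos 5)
Your overall architecture coincides with the paper's: mimic the continuous proof of Appendix \ref{pro-bl}, let the spatial bilinear equations (\ref{dis1'})--(\ref{dis2''}) produce the discrete analogues of (\ref{a'})--(\ref{b'}) and hence the hodograph relation, and let the time-evolution equations (\ref{dis3'})--(\ref{dis4''}) produce the constitutive relation (\ref{semi-mCH2}), with half-angle trigonometry and $\Gamma_l$ bridging the evolutions of $\phi_l$ and $\tilde\phi_l$. However, two steps of your plan would not survive the actual computation.

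First, your route to (\ref{semi-mCH1}) is misdirected. You propose to compute $(\phi_l)_\tau$ by log-differentiating $e^{\mathrm{i}\phi_l}$ and substituting the time bilinear equations (\ref{dis3'})--(\ref{dis4''}). But those equations, once converted to logarithmic-derivative form, control the evolution of the \emph{companion} phase $\tilde\phi_l=\mathrm{i}\ln(f_l\tilde g_l/\tilde f_l g_l)$ --- this is precisely the content of the paper's Eq. (\ref{dis-c}) --- and they never yield, nor is one needed, a formula for $(\phi_l)_\tau$. The correct route uses only the spatial equations: combining (\ref{1c}) and (\ref{2c}) gives the sine relation (\ref{dis-2}), $\Delta(\ln\tilde f_l\tilde g_l/f_lg_l)=\frac{1}{\mathrm{i}\kappa}\sin\phi_l$; differentiating it in $\tau$ gives $\delta u_l=-\frac{1}{2\kappa^2}(\sin\phi_l)_\tau=-\frac{1}{2\kappa^2}\cos\phi_l\,(\phi_l)_\tau$, after which (\ref{semi-mCH1}) is a pure trigonometric identity in which $(\phi_l)_\tau$ appears on both sides and cancels. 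The time equations play no role in (\ref{semi-mCH1}); their one job is (\ref{semi-mCH2}).

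Second, for (\ref{semi-mCH2}) you name the ingredients but omit the step where the bilinear equations actually enter. The paper adds (\ref{dis3'}) and (\ref{dis4'}), converts the $D_\tau$ products into logarithmic $\tau$-derivatives via the identity (\ref{5}), divides by $f_l\tilde g_{l+1}+f_{l+1}\tilde g_l$, eliminates the spatial data using (\ref{1c})--(\ref{2c}), and subtracts the complex-conjugate relation; evaluating the leftover coefficient $A=\frac{b}{8\kappa^2}(\cos\phi_l-1)$ then yields the single key identity
\[
\left(\frac{1}{b}-\frac{b}{8\kappa^2}(1-\cos\phi_l)\right)(\tilde\phi_{l+1}-\tilde\phi_l)_\tau+\cos\phi_l\,(u_l+u_{l+1})-2\kappa\sin\phi_l=0\,,
\]
and everything else is the trigonometric expansion of $b\,\delta^2u_l=-\frac{1}{2\kappa^2}(\sin\phi_{l+1}-\sin\phi_l)_\tau$. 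Relatedly, your picture of ``two intermediate identities whose ratio pins down $\Gamma_l$'' is not how the argument closes: there is no independent tau-function identity for $(\phi_{l+1}-\phi_l)_\tau$; that quantity enters only through the expansion of $\delta^2u_l$, and $\Gamma_l$ is invoked \emph{by definition} to trade $(\tilde\phi_{l+1}-\tilde\phi_l)_\tau$, which the bilinear equations control, for $(\phi_{l+1}-\phi_l)_\tau$, which $\delta^2u_l$ contains. Without the displayed identity your claim that $\Gamma_l$ is ``forced by the bilinear equations'' has no derivation behind it; with it, the remainder is indeed the bookkeeping you anticipated.
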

 Prior to the proof of the theorem, let us show that the proposed semi-discrete mCH equation (\ref{semi-mCH1})--(\ref{semi-mCH2}) converges to the mCH equation
 (\ref{mch}) in the continuum limit $b \to 0$. \\
 Recall that
 \begin{align*}
u= \frac{1}{2\mathrm{i} \kappa}  \left(\ln\frac{\tilde{f}\tilde{g}}{fg}\right)_{\tau}, \ \ x= \frac{y}{\kappa} + \ln \frac{g \tilde{g}}{f \tilde{f}}, \ \ \phi=\mathrm{i} \ln \frac{f \tilde{g}}{\tilde{f} g}.
\end{align*}
 It is obvious
 that as $b \to 0$, we have
 \begin{eqnarray*}
m_l \to m=\kappa \tan \phi, \ \ \phi_l, \tilde{\phi}_l \to \phi,  \ \ r_l, \tilde{r}_l \to r= \frac{\kappa}{\cos \phi}=\frac{\partial x}{\partial y}, \ \
\delta \tilde{r}_{l} \to  r_y\,.
\end{eqnarray*}
It follows that, as $b \to 0$, $\Gamma_l \to 1$,  $\tilde{g}_{l+1} \to \tilde{g}_{l}-b \tilde{g}_{l, y}$, $f_{l+1} \to f_{l}-b f_{l, y}$, and thus we obtain
\begin{align*}
&\frac{2}{b} \frac{f_{l} \tilde{g}_{l+1}-f_{l+1} \tilde{g}_{l}}{f_{l} \tilde{g}_{l+1}+f_{l+1} \tilde{g}_{l}} \to  \frac{2(f_{l, y} \tilde{g}_{l}-f_{l} \tilde{g}_{l, y})}{2 f_{l} \tilde{g}_{l}-b f_{l} \tilde{g}_{l+1, y}-b f_{l, y} \tilde{g}_{l}} \to (\ln f / \tilde{g})_{y}\,.
\end{align*}
Therefore
\begin{align*}
\delta u_l \to \frac{1}{2\mathrm{i} \kappa}  \left(\ln\frac{\tilde{f}\tilde{g}}{fg}\right)_{y \tau} \to u_y, \ \  \delta \tilde{u}_{l}\to u_y, \ \ \delta^2 u_l \to u_{yy}.
\end{align*}
Consequently, Eq. (\ref{semi-mCH2}) converges to
\begin{eqnarray}
&& m=u - r^2 u_{yy}-rr_y u_y = u-u_{xx}\,,
\end{eqnarray}
while Eq. (\ref{semi-mCH1}) converges to
\begin{eqnarray*}
&& \left( \frac{1}{r} \right)_\tau - 2 m u_y =0\,,
\end{eqnarray*}
or
\begin{eqnarray}
&&r _\tau + 2 m r^2 u_y =0\,,
\label{Amch}
\end{eqnarray}
where $r=\sqrt{m^2+\kappa^2}$.
On the other hand, Eq. (\ref{semi-mCH1}) is equivalent to
\begin{eqnarray*}
&& \frac{\partial^2 x} {\partial y \partial \tau} = 2 m u_y   = 2 (u- r(ru_y)_y) u_y = (u^2-(r^2 u^2_y))_y\,,
\end{eqnarray*}
or \begin{eqnarray*}
&& \frac{\partial x} {\partial \tau} =u^2-u^2_x\,,
\end{eqnarray*}
which implies
\begin{eqnarray*}
&&\partial_\tau = \partial_t+   (u^2-u^2_x) \partial_x\,.
\end{eqnarray*}
As a result, Eq. (\ref{Amch})  leads to
\begin{eqnarray*}
&& (\partial_t+ (u^2-u^2_x) \partial_x) r + r(u^2-u^2_x)_x =0,
\label{Amch1}
\end{eqnarray*}
which is exactly the mCH equation (\ref{mch}). In what follows, we give the detailed proof of the theorem.
\begin{proof}
Shifting $l$ to $l+1$, Eqs. (\ref{dis2'}) and (\ref{dis2''}) can be rewriten as
\begin{align}
&\frac{1}{b}\left(f_{l} \tilde{g}_{l+1}-f_{l+1} \tilde{g}_{l}\right)=\frac{1}{2 \kappa}\left(f_{l+1} \tilde{g}_{l}-g_{l+1} \tilde{f}_{l}\right), \label{dis2a''}\\
&\frac{1}{b}\left(\tilde{f}_{l} g_{l+1}-\tilde{f}_{l+1} g_{l}\right)=\frac{1}{2 \kappa}\left(\tilde{f}_{l+1} g_{l}-\tilde{g}_{l+1} f_{l}\right). \label{dis2b''}
\end{align}
Notice that we obtain a relation
\begin{align}
f_{l} \tilde{g}_{l+1}-g_{l} \tilde{f}_{l+1}=f_{l+1} \tilde{g}_{l}-g_{l+1} \tilde{f}_{l},\label{trans}
\end{align} from Eqs. (\ref{dis1'}) and (\ref{dis2a''}).
By adding (\ref{dis1'})  to (\ref{dis2a''}), and (\ref{dis1''})  to (\ref{dis2b''}), we have
\begin{align}
&\frac{2}{b} \frac{f_{l} \tilde{g}_{l+1}-f_{l+1} \tilde{g}_{l}}{f_{l} \tilde{g}_{l+1}+f_{l+1} \tilde{g}_{l}}=\frac{1}{2 \kappa}\left(1-\frac{g_{l} \tilde{f}_{l+1}+g_{l+1} \tilde{f}_{l}}{f_{l} \tilde{g}_{l+1}+f_{l+1} \tilde{g}_{l}}\right), \label{1c} \\
&\frac{2}{b} \frac{\tilde{f}_{l} {g}_{l+1}-\tilde{f}_{l+1} {g}_{l}}{\tilde{f}_{l} {g}_{l+1}+\tilde{f}_{l+1}{g}_{l}}=\frac{1}{2 \kappa}\left(1-\frac{\tilde{g}_{l} f_{l+1}+\tilde{g}_{l+1} f_{l}}{\tilde{f}_{l} g_{l+1}+\tilde{f}_{l+1} g_{l}}\right).\label{2c}
\end{align}
These two are discrete analogues of Eqs. (\ref{1'}) and (\ref{2'}).
Adding and subtracting Eqs. (\ref{1c}) and (\ref{2c}) give
\begin{align}
&\frac{2}{b}\left(\frac{f_{l} \tilde{g}_{l+1}-f_{l+1} \tilde{g}_{l}}{f_{l} \tilde{g}_{l+1}+f_{l+1} \tilde{g}_{l}}+\frac{\tilde{f}_{l} g_{l+1}-\tilde{f}_{l+1} g_{l}}{\tilde{f}_{l} g_{l+1}+\tilde{f}_{l+1} g_{l}}\right)=\frac{1}{\kappa}\left(1-\frac{1}{2}\left[\frac{g_{l} \tilde{f}_{l+1}+g_{l+1} \tilde{f}_{l}}{f_{l} \tilde{g}_{l+1}+f_{l+1} \tilde{g}_{l}}+\frac{\tilde{g}_{l} f_{l+1}+\tilde{g}_{l+1} f_{l}}{\tilde{f}_{l} g_{l+1}+\tilde{f}_{l+1} g_{l}}\right]\right), \\
&\frac{2}{b}\left(\frac{f_{l} \tilde{g}_{l+1}-f_{l+1} \tilde{g}_{l}}{f_{l} \tilde{g}_{l+1}+f_{l+1} \tilde{g}_{l}}-\frac{\tilde{f}_{l} g_{l+1}-\tilde{f}_{l+1} g_{l}}{\tilde{f}_{l} g_{l+1}+\tilde{f}_{l+1} g_{l}}\right)=\frac{1}{2 \kappa}\left(\frac{\tilde{g}_{l} f_{l+1}+\tilde{g}_{l+1} f_{l}}{\tilde{f}_{l} g_{l+1}+\tilde{f}_{l+1} g_{l}}-\frac{g_{l} \tilde{f}_{l+1}+g_{l+1} \tilde{f}_{l}}{f_{l} \tilde{g}_{l+1}+f_{l+1} \tilde{g}_{l}}\right),
\end{align}
which  are discrete analogues of Eqs. (\ref{3'}) and (\ref{4'}), respectively.
They can be further abbreviated as
\begin{align}
\Delta\left(\ln f_{l} \tilde{f}_{l} / g_{l} \tilde{g}_{l}\right) &=\frac{1}{\kappa}\left(1-\cos \phi_{l}\right), \\
\Delta\left(\ln  \tilde{f}_{l} \tilde{g}_{l}/ f_{l} g_{l}\right) &=\frac{1}{\mathrm{i} \kappa} \sin \phi_{l}\,. \label{dis-2}
\end{align}
Referring to the definition of  $\delta u_l$ and differentiating (\ref{dis-2}) with respect to $\tau$, we have
\begin{align*}
\delta u_l= -\frac{1}{2\kappa^2}  (\sin \phi_{l})_{\tau}=  \frac{1}{2} \frac{\cos\phi_l}{\kappa \sin\phi_l}\left(\frac{\cos\phi_l}{\kappa}\right)_\tau=  \frac{1}{2}  \frac{1}{m_l}\left(\frac{1}{r_l}\right)_\tau,
\end{align*}
or equivalently,
\begin{align}
(r_l)_\tau+2 r_l^2m_l (\delta u_l)=0.
\label{mCH-discrete1}
\end{align}
This is a discrete analogue of Eq. (\ref{eq1'}).  Since $\delta_l = r^{-1}_l$, Eq. (\ref{mCH-discrete1}) can be rewritten as
\begin{align}
 \frac{d \delta_l}{d\tau}  =2 m_l (\delta u_l),
\end{align}
which constitutes the first equation of the semi-discrete mCH equation. Now let us proceed to deducing the second equation of the semi-discrete mCH equation corresponding to Eq. (\ref{m}). Adding Eqs. (\ref{dis3'}) and (\ref{dis4'}) leads to
\begin{align*}
&\frac{2}{b} D_{\tau}\left(f_{l} \cdot \tilde{g}_{l+1}-f_{l+1} \cdot \tilde{g}_{l}\right)=\left(\frac{1}{2 \kappa} D_{\tau}+\kappa\right)\left(f_{l} \cdot \tilde{g}_{l+1}+f_{l+1} \cdot \tilde{g}_{l}-\tilde{f}_{l+1} \cdot g_{l}-\tilde{f}_{l} \cdot g_{l+1}\right).
\end{align*}
By using the relation (\ref{5}), one can obtain
\begin{align*}
&\frac{1}{b}\left(\left(f_{l} \tilde{g}_{l+1}+f_{l+1} \tilde{g}_{l}\right)\left(\ln \frac{f_{l} \tilde{g}_{l}}{f_{l+1} \tilde{g}_{l+1}}\right)_{\tau}+\left(f_{l} \tilde{g}_{l+1}-f_{l+1} \tilde{g}_{l}\right)\left(\ln \frac{f_{l} f_{l+1}}{\tilde{g}_{l} \tilde{g}_{l+1}}\right)_{\tau}\right) \\
&=\frac{1}{4 \kappa}\left(\left(f_{l} \tilde{g}_{l+1}-f_{l+1} \tilde{g}_{l}\right)\left(\ln \frac{f_{l} \tilde{g}_{l}}{f_{l+1} \tilde{g}_{l+1}}\right)_{\tau}+\left(f_{l} \tilde{g}_{l+1}+f_{l+1} \tilde{g}_{l}\right)\left(\ln \frac{f_{l} f_{l+1}}{\tilde{g}_{l} \tilde{g}_{l+1}}\right)_{\tau}\right. \\
&\left.+\left(\tilde{f}_{l} g_{l+1}-\tilde{f}_{l+1} g_{l}\right)\left(\ln \frac{\tilde{f}_{l} g_{l}}{\tilde{f}_{l+1} g_{l+1}}\right)_{\tau}-\left(\tilde{f}_{l} g_{l+1}+\tilde{f}_{l+1} g_{l}\right)\left(\ln \frac{\tilde{f}_{l} \tilde{f}_{l+1}}{g_{l} g_{l+1}}\right)_{\tau}\right) \\
&+\kappa\left(f_{l} \tilde{g}_{l+1}+f_{l+1} \tilde{g}_{l}-\tilde{f}_{l+1} g_{l}-\tilde{f}_{l} g_{l+1}\right).
\end{align*}

Dividing both sides by $f_{l} \tilde{g}_{l+1}+f_{l+1} \tilde{g}_{l}$, we have
\begin{align}
&\frac{1}{b}\left(\ln \frac{f_{l} \tilde{g}_{l}}{f_{l+1} \tilde{g}_{l+1}}\right)_{\tau}+\frac{f_{l} \tilde{g}_{l+1}-f_{l+1} \tilde{g}_{l}}{f_{l} \tilde{g}_{l+1}+f_{l+1} \tilde{g}_{l}}\left(\ln \frac{f_{l} f_{l+1}}{\tilde{g}_{l} \tilde{g}_{l+1}}\right)_{\tau} \nonumber \\
&=\frac{1}{4 \kappa}\left(\frac{f_{l} \tilde{g}_{l+1}-f_{l+1} \tilde{g}_{l}}{f_{l} \tilde{g}_{l+1}+f_{l+1} \tilde{g}_{l}}\left(\ln \frac{f_{l} \tilde{g}_{l}}{f_{l+1} \tilde{g}_{l+1}}\right)_{\tau}+\left(\ln \frac{f_{l} f_{l+1}}{\tilde{g}_{l} \tilde{g}_{l+1}}\right)_{\tau}\right. \nonumber \\
&\left.+\frac{\tilde{f}_{l} g_{l+1}-\tilde{f}_{l+1} g_{l}}{f_{l} \tilde{g}_{l+1}+f_{l+1} \tilde{g}_{l}}\left(\ln \frac{\tilde{f}_{l} g_{l}}{\tilde{f}_{l+1} g_{l+1}}\right)_{\tau}-\frac{\tilde{f}_{l} g_{l+1}+\tilde{f}_{l+1} g_{l}}{f_{l} \tilde{g}_{l+1}+f_{l+1} \tilde{g}_{l}}\left(\ln \frac{\tilde{f}_{l} \tilde{f}_{l+1}}{g_{l} g_{l+1}}\right)_{\tau}\right) \nonumber \\
&+\kappa\left(1-\frac{\tilde{f}_{l+1} g_{l}+\tilde{f}_{l} g_{l+1}}{f_{l} \tilde{g}_{l+1}+f_{l+1} \tilde{g}_{l}}\right).
\label{inta}
\end{align}
By a substitution of Eqs. (\ref{1c}) and (\ref{2c}) into Eq. (\ref{inta}), we obtain
\begin{align}
\frac{1}{b}\left(\ln \frac{f_{l} \tilde{g}_{l}}{f_{l+1} \tilde{g}_{l+1}}\right)_{\tau}=&\frac{1}{4 \kappa}\left(\frac{\tilde{f}_{l} g_{l+1}-\tilde{f}_{l+1} g_{l}}{f_{l} \tilde{g}_{l+1}+f_{l+1} \tilde{g}_{l}}\left(\ln \frac{\tilde{f}_{l} f_{l+1}g_l\tilde{g}_{l+1}}{f_l\tilde{f}_{l+1} \tilde{g}_l g_{l+1}}\right)_{\tau}-\frac{\tilde{f}_{l} g_{l+1}+\tilde{f}_{l+1} g_{l}}{f_{l} \tilde{g}_{l+1}+f_{l+1} \tilde{g}_{l}}\left(\ln \frac{\tilde{f}_{l}\tilde{f}_{l+1} \tilde{g}_{l}\tilde{g}_{l+1}}{f_{l} f_{l+1}g_{l}g_{l+1}}\right)_{\tau}\right) \nonumber\\
&+\kappa\left(1-\frac{\tilde{f}_{l+1} g_{l}+\tilde{f}_{l} g_{l+1}}{f_{l} \tilde{g}_{l+1}+f_{l+1} \tilde{g}_{l}}\right).
\label{1d}
\end{align}
This is a discrete analogue of Eq. (\ref{3''}).
Taking the complex conjugate of (\ref{1d}), we have
\begin{align}
\frac{1}{b}\left(\ln \frac{\tilde{f}_{l} {g}_{l}}{\tilde{f}_{l+1} {g}_{l+1}}\right)_{\tau}=&\frac{1}{4 \kappa}\left(\frac{{f}_{l} \tilde{g}_{l+1}-{f}_{l+1} \tilde{g}_{l}}{\tilde{f}_{l} {g}_{l+1}+\tilde{f}_{l+1} {g}_{l}}\left(\ln \frac{{f}_{l} \tilde{f}_{l+1}\tilde{g}_l g_{l+1}}{\tilde{f}_l{f}_{l+1} {g}_l \tilde{g}_{l+1}}\right)_{\tau}-\frac{f_{l} \tilde{g}_{l+1}+f_{l+1} \tilde{g}_{l}}{\tilde{f}_{l} g_{l+1}+\tilde{f}_{l+1} g_{l}}\left(\ln \frac{f_{l} f_{l+1}g_{l}g_{l+1}}{\tilde{f}_{l}\tilde{f}_{l+1} \tilde{g}_{l}\tilde{g}_{l+1}}\right)_{\tau}\right) \nonumber\\
&+\kappa\left(1-\frac{f_{l} \tilde{g}_{l+1}+f_{l+1} \tilde{g}_{l}}{\tilde{f}_{l+1} g_{l}+\tilde{f}_{l} g_{l+1}}\right)\,,
\label{2d}
\end{align}
which is a discrete analogue of Eq. (\ref{4''}).
Subtracting above two equations (\ref{1d}) and (\ref{2d}), one yields
\begin{align*}
&\frac{1}{b}\left(\ln \frac{f_{l} \tilde{g}_{l} \tilde{f}_{l+1} g_{l+1}}{\tilde{f}_{l} g_{l} f_{l+1} \tilde{g}_{l+1}}\right)_{\tau} \\
&=\frac{1}{4 \kappa}\left(\frac{\tilde{f}_{l} g_{l+1}-\tilde{f}_{l+1} g_{l}}{f_{l} \tilde{g}_{l+1}+f_{l+1} \tilde{g}_{l}}+\frac{{f}_{l} \tilde{g}_{l+1}-{f}_{l+1} \tilde{g}_{l}}{\tilde{f}_{l} {g}_{l+1}+\tilde{f}_{l+1} {g}_{l}}\right)\left(\ln \frac{\tilde{f}_{l} f_{l+1}g_l\tilde{g}_{l+1}}{f_l\tilde{f}_{l+1} \tilde{g}_l g_{l+1}}\right)_{\tau} \\
&\quad-\frac{1}{4 \kappa}(\frac{\tilde{f}_{l} g_{l+1}+\tilde{f}_{l+1} g_{l}}{f_{l} \tilde{g}_{l+1}+f_{l+1} \tilde{g}_{l}}+\frac{f_{l} \tilde{g}_{l+1}+f_{l+1} \tilde{g}_{l}}{\tilde{f}_{l} g_{l+1}+\tilde{f}_{l+1} g_{l}})\left(\ln \frac{\tilde{f}_{l}\tilde{f}_{l+1} \tilde{g}_{l}\tilde{g}_{l+1}}{f_{l} f_{l+1}g_{l}g_{l+1}}\right)_{\tau} \\
&\quad+\kappa\left(\frac{f_{l} \tilde{g}_{l+1}+f_{l+1} \tilde{g}_{l}}{\tilde{f}_{l+1} g_{l}+\tilde{f}_{l} g_{l+1}}-\frac{\tilde{f}_{l+1} g_{l}+\tilde{f}_{l} g_{l+1}}{f_{l} \tilde{g}_{l+1}+f_{l+1} \tilde{g}_{l}}\right)\,,
\end{align*}
which can be rewritten as
\begin{align}
&\left(\frac{1}{b}+A\right)\left(\ln \frac{f_{l} \tilde{g}_{l} \tilde{f}_{l+1} g_{l+1}}{\tilde{f}_{l} g_{l} f_{l+1} \tilde{g}_{l+1}}\right)_{\tau}\nonumber\\
&+\frac{1}{4 \kappa}(\frac{\tilde{f}_{l} g_{l+1}+\tilde{f}_{l+1} g_{l}}{f_{l} \tilde{g}_{l+1}+f_{l+1} \tilde{g}_{l}}+\frac{f_{l} \tilde{g}_{l+1}+f_{l+1} \tilde{g}_{l}}{\tilde{f}_{l} g_{l+1}+\tilde{f}_{l+1} g_{l}})\left(\ln \frac{\tilde{f}_{l}\tilde{f}_{l+1} \tilde{g}_{l}\tilde{g}_{l+1}}{f_{l} f_{l+1}g_{l}g_{l+1}}\right)_{\tau}\nonumber\\
&-\kappa\left(\frac{f_{l} \tilde{g}_{l+1}+f_{l+1} \tilde{g}_{l}}{\tilde{f}_{l+1} g_{l}+\tilde{f}_{l} g_{l+1}}-\frac{\tilde{f}_{l+1} g_{l}+\tilde{f}_{l} g_{l+1}}{f_{l} \tilde{g}_{l+1}+f_{l+1} \tilde{g}_{l}}\right)=0,\label{dis-c}
\end{align}
where
\begin{align*}
A=\frac{1}{4 \kappa}\left(\frac{\tilde{f}_{l} g_{l+1}-\tilde{f}_{l+1} g_{l}}{f_{l} \tilde{g}_{l+1}+f_{l+1} \tilde{g}_{l}}+\frac{{f}_{l} \tilde{g}_{l+1}-{f}_{l+1} \tilde{g}_{l}}{\tilde{f}_{l} {g}_{l+1}+\tilde{f}_{l+1} {g}_{l}}\right)=\frac{b}{8 \kappa^2}(-1+\cos\phi_l)\,.
\end{align*}
which approaches zero as $b \to 0$.
Eq. (\ref{dis-c})  is a discrete analogue of Eq. (\ref{c}).

Using the definition of $\tilde{\phi}_l$, Eq. (\ref{dis-c}) can be rewritten as
\begin{align}
\left(\frac{1}{b}-\frac{b}{8 \kappa^2}(1-\cos\phi_l)\right)(\tilde{\phi}_{l+1}-\tilde{\phi}_{l})_{\tau}+\cos\phi_l(u_l+u_{l+1})-2\kappa \sin\phi_l=0.
\end{align}
On the other hand,
\begin{align*}
b \delta^2 u_l&= \frac{1}{2\mathrm{i}\kappa}\left(\Delta\left(\ln \frac{\tilde{f}_{l+1} \tilde{g}_{l+1} } { f_{l+1} g_{l+1}}\right) -\Delta\left(\ln \frac{\tilde{f}_{l} \tilde{g}_{l} } { f_{l} g_{l}}\right)\right)_\tau\\
&=-\frac{1}{2\kappa^2}(\sin\phi_{l+1}-\sin\phi_l)_\tau=-\frac{1}{2\kappa^2}\left(2\cos\frac{\phi_{l+1}+\phi_l}{2} \sin\frac{\phi_{l+1}-\phi_l}{2}\right)_\tau\\
&=-\frac{1}{2\kappa^2}\left(-\sin\frac{\phi_{l+1}+\phi_l}{2} \sin\frac{\phi_{l+1}-\phi_l}{2}(\phi_{l+1}+\phi_l)_\tau+\cos\frac{\phi_{l+1}+\phi_l}{2} \cos\frac{\phi_{l+1}-\phi_l}{2} (\phi_{l+1}-\phi_l)_\tau\right).
\end{align*}
Referring to the definition of $\Gamma_l$, which converges to $1$ as $b\to 0$, we have
\begin{align}
&- 2 \kappa^2  \delta^2 u_l + \sin\frac{\phi_{l+1}+\phi_l}{2} \sin\frac{\phi_{l+1}-\phi_l}{2}(\phi_{l+1}+\phi_l)_\tau \nonumber \\
&=- \Gamma^{-1}_l \cos\frac{\phi_{l+1}+\phi_l}{2} \cos\frac{\phi_{l+1}-\phi_l}{2} \left( \cos\phi_l(u_l+u_{l+1})-2\kappa \sin\phi_l\right)\,.
\end{align}
As a result, we obtain
\begin{align}
m_l&=   \frac 12 (u_l+u_{l+1})  -\Gamma_l \frac{  \kappa^2  \delta^2 u_l}{ \cos\phi_l \cos\frac{\phi_{l+1}+\phi_l}{2}}
+ \Gamma_l \tan\frac{\phi_{l+1}+\phi_l}{2} \tan\frac{\phi_{l+1}-\phi_l}{2} \frac{(\phi_{l+1}+\phi_l)_\tau}{2 b \cos\phi_l } \nonumber \\
&=  \frac 12 (u_l+u_{l+1}) -  \Gamma_l  r_l \tilde{r}_l   (\delta^2 u_l) -   \Gamma_l  r_l (\delta \tilde{r}_{l}) (\delta \tilde{u}_{l})
\end{align}
by using the definitions of  $m_l$, $\tilde{r}_l$,  $\delta \tilde{r}_{l}$  and $\delta \tilde{u}_{l}$.  The proof is complete.
\end{proof}
\subsection{One- and two-soliton solutions}
\subsubsection{One soliton solutions}
The $\tau$-functions for the one-soliton solution of the mCH equation (\ref{mch}) are
\begin{align*}
f\propto 1+\mathrm{i}e^{-\zeta},\quad g\propto 1+\mathrm{i}\frac{1+2\kappa p}{1-2\kappa p}e^{-\zeta},
\end{align*}
with
\begin{align*}
\zeta=2p(y-\frac{2\kappa^3}{1-4\kappa^2p^2}t)+\zeta_{0},
\end{align*}
where we set $p=p_1$ for simplicity. Thus we can obtain the one-soliton solution in a parametric form
\begin{align}
u&=\frac{-8\kappa^2p}{(2\kappa p+1)(2\kappa p-1)}\frac{(2\kappa p-1)e^{3\zeta}-(2\kappa p+1)e^{\zeta}}{(2\kappa p-1)^2e^{4\zeta}+(4\kappa^2p^2+1)e^{2\zeta}+(2\kappa p+1)^2}\nonumber\\
&=\frac{-8\kappa^2p}{(1-4\kappa^2 p^2)^{\frac{3}{2}}}\frac{\cosh(\tilde{\zeta})}{\cosh(2\tilde{\zeta})+\frac{1+4\kappa^2 p^2}{1-4\kappa^2 p^2}},\\
X&=x-ct-x_0=\frac{\tilde{\zeta}}{2\kappa p}+\ln\frac{1-2\kappa p \tanh(\tilde{\zeta})}{1+2\kappa p\tanh(\tilde{\zeta})},
\end{align}
with
\begin{align*}
\tilde{\zeta}=\zeta+\varphi,\,e^\varphi=\sqrt{\frac{1-2\kappa p}{1+2\kappa p}},\, c=\frac{2\kappa^2}{1-4\kappa^2 p^2},
\end{align*}
where $c$ stands for the velocity of the soliton and $x_0=y_0/\kappa$ has been chosen such that $X=0$ when $\tilde{\zeta}=0$. One can see that the one-soliton solution above is equivalent to the solution in Ref. \cite{mats} by setting $k=2p$.

For the semi-discrete mCH equation, the $ \tau$-functions are
\begin{align*}
f_l\propto 1+\mathrm{i}\left(\frac{1+bp}{1-bp}\right)^le^{-\theta},\quad  g_l\propto 1-\mathrm{i}\left(\frac{1+bp}{1-bp}\right)^l\frac{2\kappa p+ 1}{2\kappa p-1}e^{-\theta},
\end{align*}
with $\theta=-\frac{4\kappa^3p}{1-4\kappa^2p^2}\tau+\theta_{0}$. By taking $b=-0.1$, Figure \ref{1-soliton-fig} depicts a one-soliton solution to the semi-discrete mCH equation while comparing with the one to the mCH equation with different values of $p$. When $p=0.3$,  $u$ is single-valued and owns one peak since $X_y>0$ as shown in  Figure \ref{mch-1}. Figure \ref{mch-2} illustrates the symmetric singular soliton in Ref. \cite{mats} which is three-valued with  two spikes for $p=0.45$, while $u$ becomes antisymmetric for $p=0.6$ (see Figure \ref{mch-3}).
\begin{figure}[H]
	\centering
	\subfigure[]
	{
		\label{mch-1}
		\includegraphics[width=1.7in]{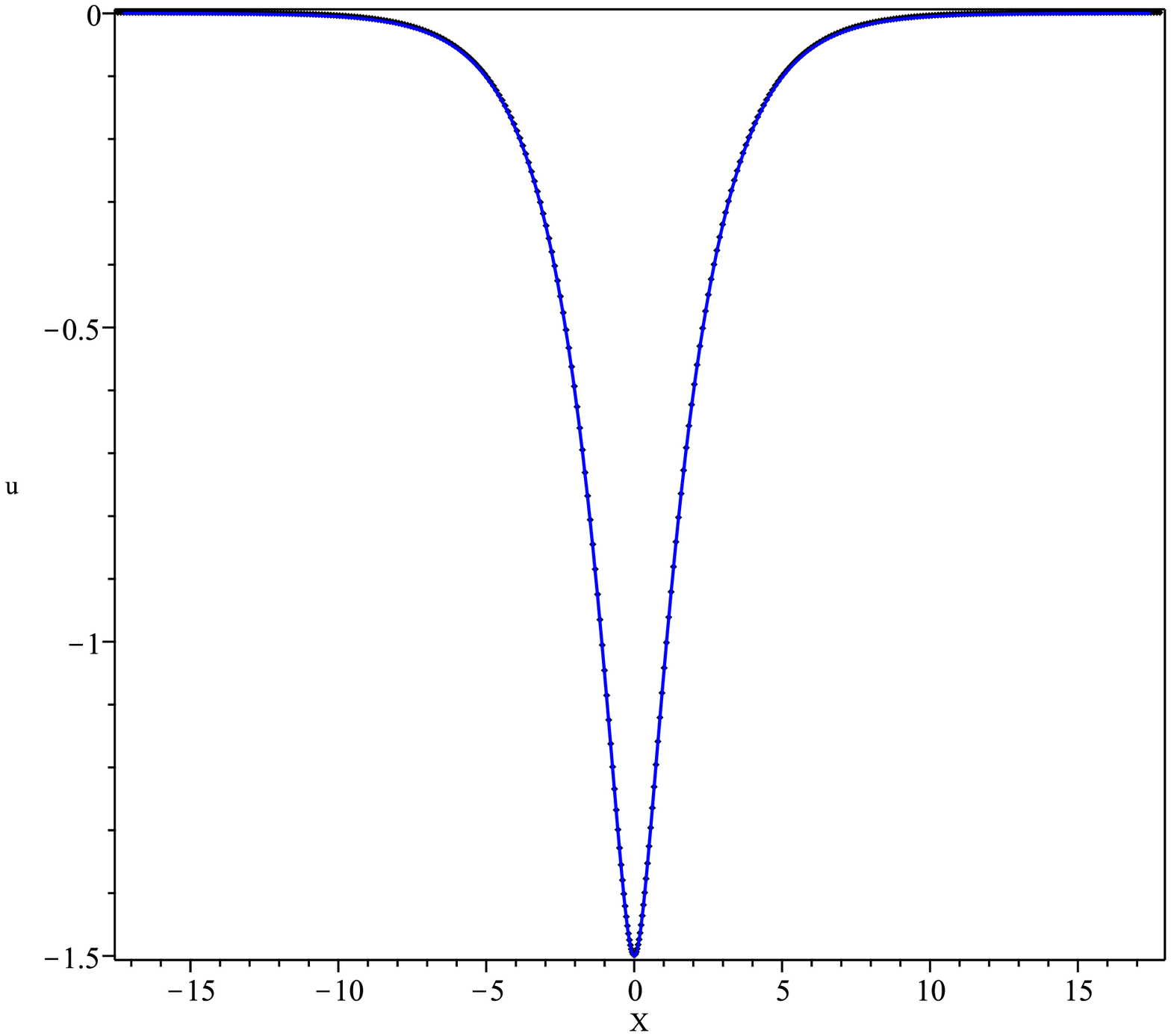}
	}
	\hspace*{3em}
	\subfigure[]
	{\label{mch-2}
		\includegraphics[width=1.7in]{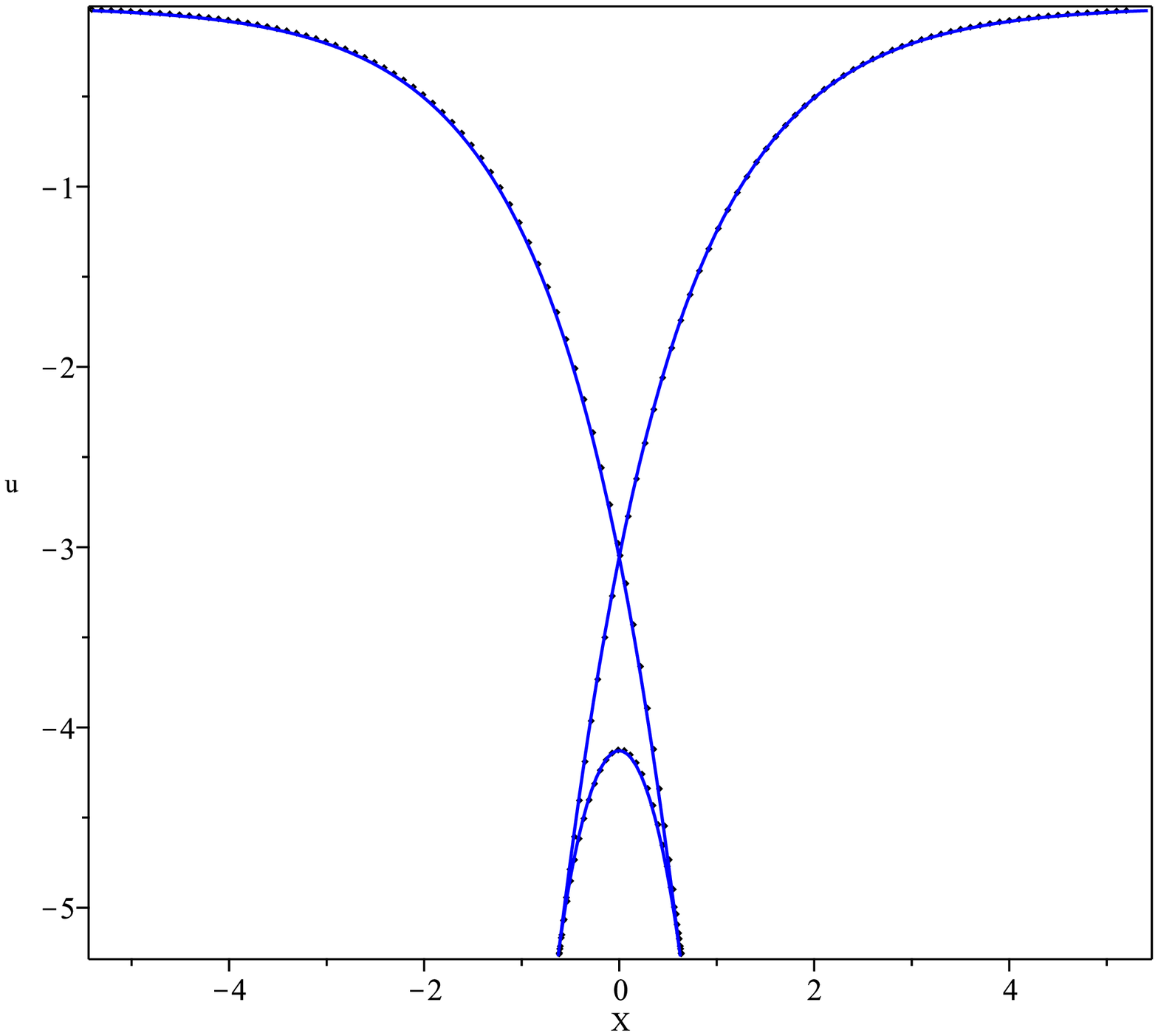}
	}
    \hspace*{3em}
	\subfigure[]
    {\label{mch-3}
	\includegraphics[width=1.7in]{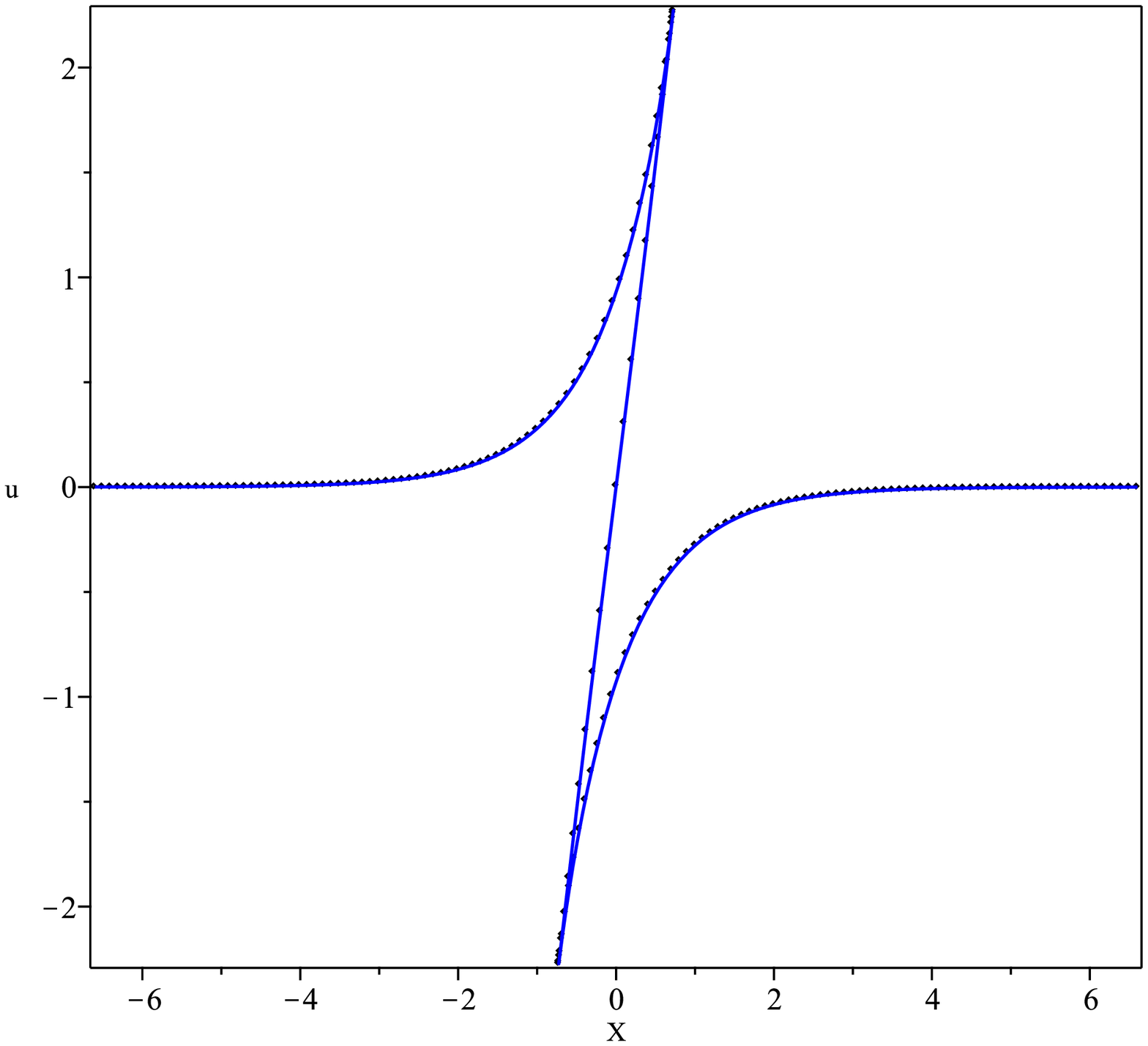}
    }
	\caption{Comparison between the one-soliton solution of the mCH equation and the semi-discrete mCH equation with $\kappa=1$ at $t=0$; solid line: mCH equation, dot: semi-discrete mCH equation. (a) $p=0.3$, (b) $p=0.45$, (c) $p=0.6$.}\label{1-soliton-fig}
\end{figure}
\subsubsection{Two soliton solutions}
The $\tau $-functions for two soliton solutions of the mCH equation (\ref{mch}) are
\begin{align}
&f\propto 1+\mathrm{i}\frac{p_1+p_2}{p_2-p_1}e^{-\zeta_1}-\mathrm{i}\frac{p_1+p_2}{p_2-p_1}e^{-\zeta_2}+e^{-\zeta_1-\zeta_2},\\
&g\propto1+\mathrm{i}a_1\frac{p_1+p_2}{p_2-p_1}e^{-\zeta_1}-\mathrm{i}a_2\frac{p_1+p_2}{p_2-p_1}e^{-\zeta_2}+a_1a_2e^{-\zeta_1-\zeta_2},
\end{align}
with
\begin{align}
&\zeta_i=2p_i(y-\frac{2\kappa^3}{1-4\kappa^2p_i^2}t)+\zeta_{i0},\,i=1,2,\\
&a_i=\frac{1+2\kappa p_i}{1-2 \kappa p_i},\,i=1,2.
\end{align}
For the semi-discrete mCH equation, the $\tau$-functions are
\begin{align}
&f_l\propto 1+\mathrm{i}\frac{p_1+p_2}{p_2-p_1}z_1^le^{-\theta_1}-
\mathrm{i}\frac{p_1+p_2}{p_2-p_1}z_2^le^{-\theta_2}
+(z_1z_2)^le^{-\theta_1-\theta_2},\\
&g_l\propto 1+\mathrm{i}a_1\frac{p_1+p_2}{p_2-p_1}z_1^l e^{-\theta_1}-\mathrm{i}a_2\frac{p_1+p_2}{p_2-p_1}z_2^l e^{-\theta_2}+a_1a_2(z_1z_2)^le^{-\theta_1-\theta_2},
\end{align}
with $z_i= \frac{1+bp_i}{1-bp_i}$,  $\theta_i=-\frac{4\kappa^3p_i}{1-4\kappa^2p_i^2}t+\theta_{i0},\,i=1,2$. As pointed out in Ref. \cite{mats}, if we set $p_2=p_1^*$,  two-soliton solution becomes a breather solution. Here $^*$ means complex conjugate. Figure \ref{2-soliton-fig} displays such a breather solution with
$p_1=0.1+0.2\mathrm{i}$ and $b=-0.1$. It  can be found that the  solution of the semi-discrete mCH equation agrees with that of the mCH equation very well.
\begin{figure}[H]
	\centering
	\subfigure[]
	{
		\label{mch-4}
		\includegraphics[width=2.2in]{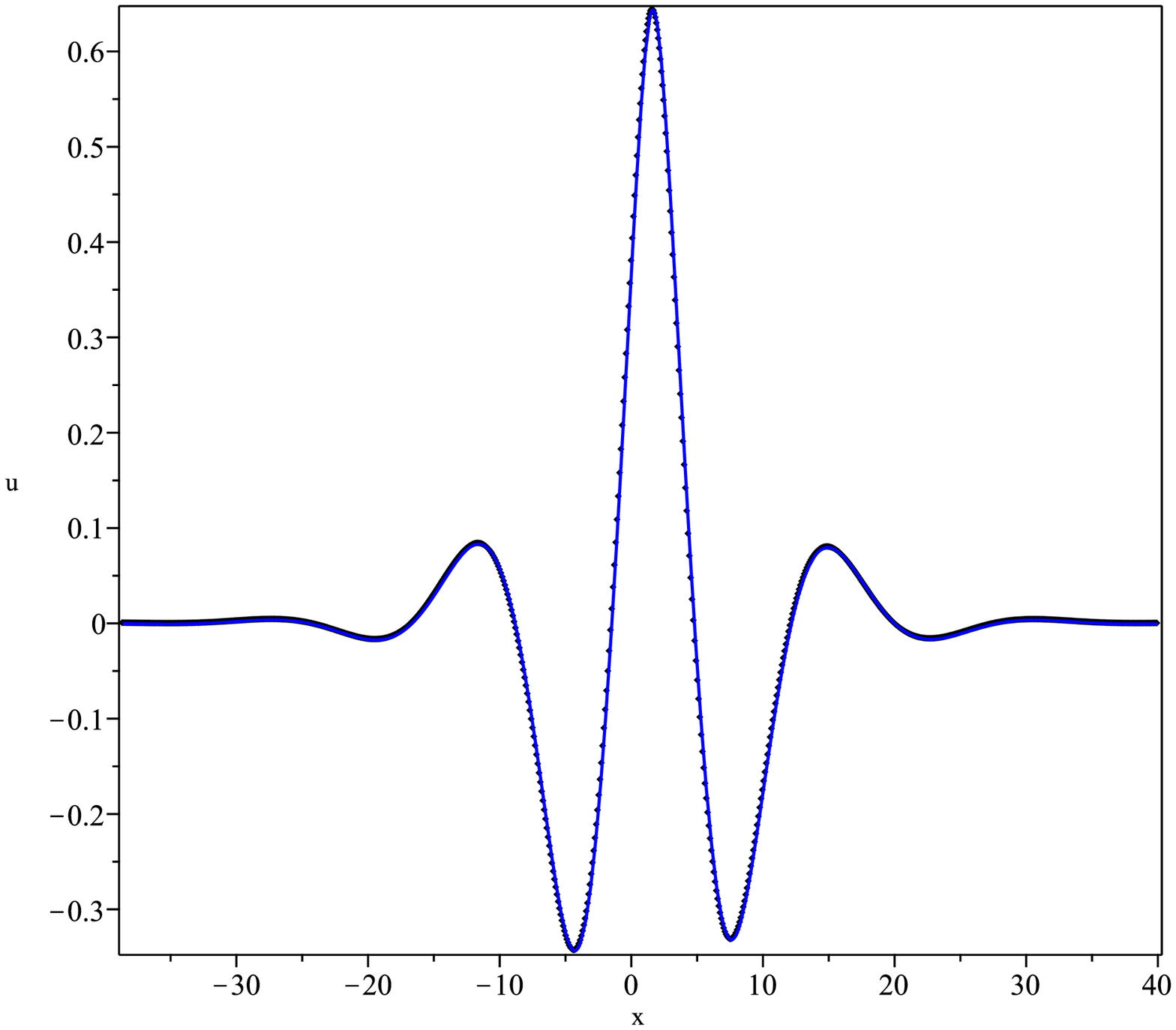}
	}
	\hspace*{3em}
	\subfigure[]
	{\label{mch-5}
		\includegraphics[width=2.2in]{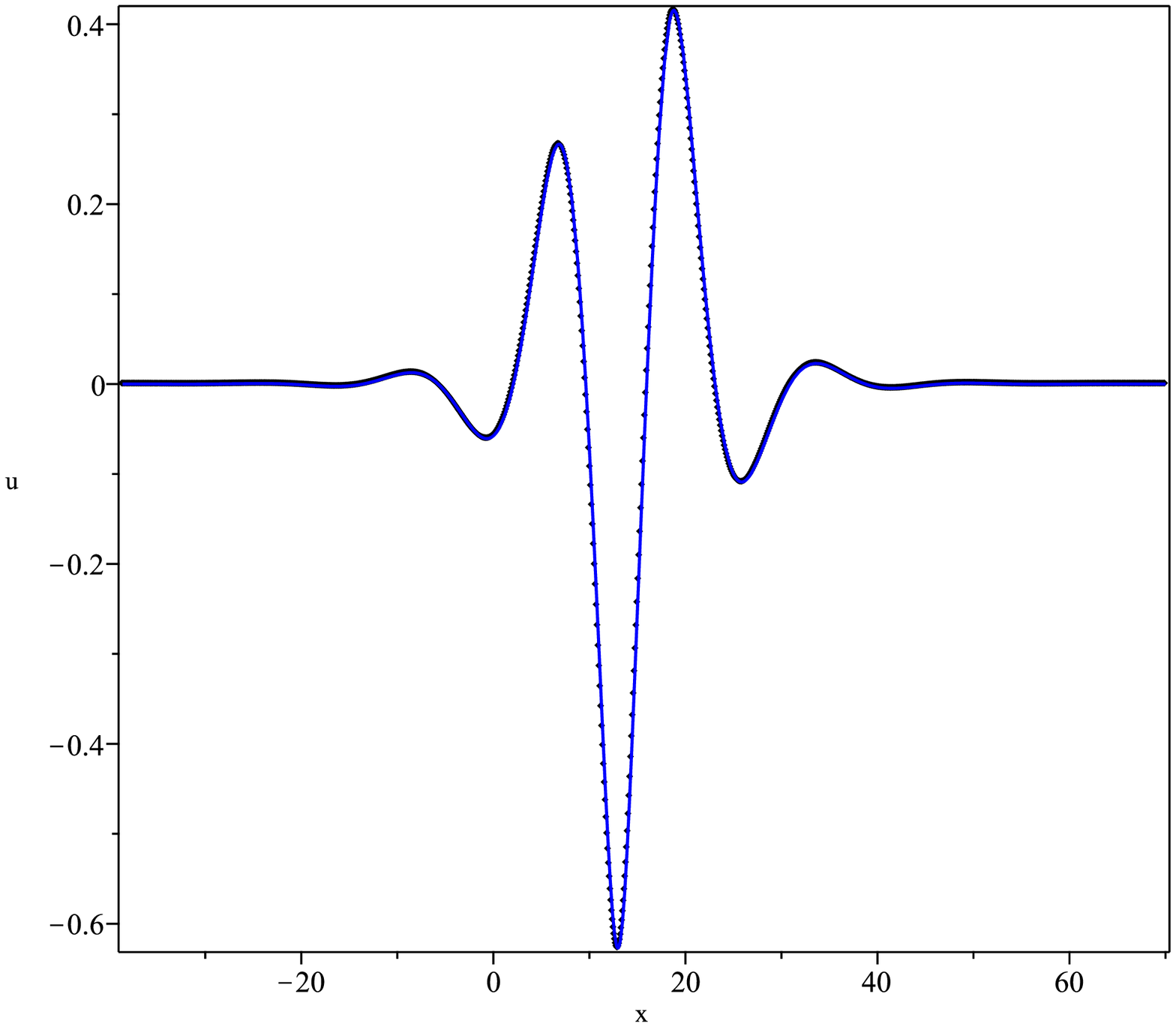}
	}
	\caption{Comparison between the two-soliton solution of the mCH and the semi-discrete mCH equation for $p_1=0.1+0.2\mathrm{i},\,p_2=0.1-0.2\mathrm{i}$ with $\kappa=1$; solid line: mCH equation, dot: semi-discrete mCH equation. (a) $t=0$, (b) $t=10$.}\label{2-soliton-fig}
\end{figure}
\section{Conclusion}\label{sec4}
In this paper, starting from the discrete KP equation, we have constructed an integrable semi-discrete analogue of the modified Camassa-Holm equation with cubic nonlinearity and linear dispersion through Hirota's bilinear approach and a series of reductions.  The $N$-soliton solution for the semi-discrete mCH equation is provided and approved in  both the Gram-type and the Casorati-type determinant forms. It is confirmed again that the discrete KP equation is one of the fundamental equations for integrable systems following the work by Hirota, Ohta, Tsujimoto and Nimmo etc. However, several questions arise naturally. First, what is the Lax pair of this integrable semi-discrete mCH equation?  It becomes possible for further investigation of this semi-discrete mCH equation such as the inverse scattering transform, the symmetry and the conservation law if the Lax pair is known. Actually, there is a general question related: how could we obtain the Lax pair for the reduced discrete integrable systems based on the Lax pair of discrete KP equation.  Second, can we construct the connection between the discrete KP equation and some known integrable systems such as the two-component CH equation\upcite{YoujinLMP}, the complex short pulse equation\upcite{Feng15}, the Fokas-Lennels equation\upcite{FLa,FLb} and the massive Thirring model equation\upcite{Thirring, MikhailovMT}? Third:, once we know the these connections, can we construct the integrable discretizations of these soliton equations with import physical applications? The last question: can we construct other types of soliton solutions such as rogue wave solutions for both the discrete and continuous equations by using their connection to the discrete KP equation? We expect to have more subsequent  papers  in the near future by the authors and other researchers in answering above questions.  

\section*{Acknowledgement}
GY's work is supported by National Natural Science Foundation of China (Grant No. 11871336). BF's work is
partially supported by National Science Foundation (NSF) under Grant No. DMS-1715991 and U.S. Department of Defense (DoD), Air Force for Scientific
Research (AFOSR) under grant No. W911NF2010276.

\appendix
\section{Proof of Lemma 2.2}\label{pro-cont}

\begin{proof}
From the expression of $f_{n}^{(i)}(k)$, we note that
\begin{align}
f_{n+1,k}^{(i)}=f_{n,k+1}^{(i)}+af_{n,k}^{(i)},\,\partial_{x_1}f_{n,k}^{(i)}=f_{n+1,k}^{(i)},\,\partial_{x_{-1}}f_{n,k}^{(i)}=f_{n,k-1}^{(i)}.
\end{align}	
For convenience, we introduce a notation
\begin{equation}
|0_k,1_k,\cdots,N-1_k|=\left|\begin{array}{cccc}
f_{n}^{(1)}(k) & f_{n+1}^{(1)}(k) & \cdots & f_{n+N-1}^{(1)}(k) \\
f_{n}^{(2)}(k) & f_{n+1}^{(2)}(k) & \cdots & f_{n+N-1}^{(2)}(k) \\
\cdots & \cdots & \cdots & \ldots \\
f_{n}^{(N)}(k) & f_{n+1}^{(N)}(k) & \cdots & f_{n+N-1}^{(N)}(k)
\end{array}\right|.
\end{equation}
We can verify the following relations
\begin{align}
\tau_{n+1,k}&=|1_k,2_k,\cdots,N_k|=|1_k,1_{k+1},\cdots,N-1_{k+1}|,\\
\tau_{n,k}&=|0_k,0_{k+1},\cdots,N-2_{k+1}|,\\
\partial_{x_1}\tau_{n,k+1}&=|0_{k+1},1_{k+1},\cdots,N-2_{k+1},N_{k+1}|\\
\partial_{x_1}\tau_{n+1,k}&=a|1_k,1_{k+1},\cdots,N-1_{k+1}|+|1_k,1_{k+1},\cdots,N-2_{k+1},N_{k+1}|
\end{align}
It is noticed that
\begin{align}
-\tau_{n,k+1} \partial_{x_1}\tau_{n+1,k}+a\tau_{n,k+1}\tau_{n+1,k}&=-|0_{k+1},\cdots,N-1_{k+1}||1_k,1_{k+1},\cdots,N-2_{k+1},N_{k+1}|\\
a\tau_{n+1,k+1}\tau_{n,k}
&=|1_{k+1},\cdots,N_{k+1}|||1_k,0_{k+1},\cdots,N-2_{k+1}|
\end{align}
Therefore, the Pl{\"u}cker relation for determinant
\begin{align}
\begin{split}
&|0_{k+1},\cdots,N-1_{k+1}||1_k,1_{k+1},\cdots,N-2_{k+1},N_{k+1}|\\
&-|0_{k+1},1_{k+1},\cdots,N-2_{k+1},N_{k+1}||1_k,1_{k+1},\cdots,N-1_{k+1}|\\
&+|1_{k+1},\cdots,N_{k+1}|||1_k,0_{k+1},\cdots,N-2_{k+1}|=0.
\end{split}
\end{align}
gives
\begin{align}
\partial_{x_1}\tau_{n,k+1}\tau_{n+1,k}-\tau_{n,k+1} \partial_{x_1}\tau_{n+1,k}+a\tau_{n,k+1}\tau_{n+1,k}=a\tau_{n+1,k+1}\tau_{n,k}
\end{align}
which is the bilinear equation (\ref{BL1}).

Next, we prove the bilinear equation (\ref{BL2}) by rewriting the l.h.s. into two parts
\begin{align*}
D_{x_{-1}}(D_{x_1}+a)\tau_{n, k+1}\cdot \tau_{n+1, k}
&=D_{x_{-1}}(\partial_{x_1}\tau_{n, k+1}\cdot\tau_{n+1,k}-\tau_{n, k+1}\cdot\partial_{x_1}\tau_{n+1,k}+a\tau_{n, k+1}\cdot \tau_{n+1, k})\\
&=D_{x_{-1}}(\partial_{x_1}\tau_{n, k+1}\cdot\tau_{n+1,k})+D_{x_{-1}}(-\tau_{n, k+1}\cdot\partial_{x_1}\tau_{n+1,k}+a\tau_{n, k+1}\cdot \tau_{n+1, k})\\
&=I_1+I_2,
\end{align*}
where
\begin{align*}
I_1&=D_{x_{-1}}(\partial_{x_1}\tau_{n, k+1}\cdot\tau_{n+1,k})\\
&=D_{x_{-1}}(|0_{k+1},1_{k+1},\cdots,N-2_{k+1},N_{k+1}|\cdot |1_k,1_{k+1},\cdots,N-1_{k+1}|)
\end{align*}
\begin{align*}
I_2&=D_{x_{-1}}(-\tau_{n, k+1}\cdot\partial_{x_1}\tau_{n+1,k}+a\tau_{n, k+1}\cdot \tau_{n+1, k})\\
&=D_{x_{-1}}(-|0_{k+1},\cdots,N-1_{k+1}|\cdot |1_k,1_{k+1},\cdots,N-2_{k+1},N_{k+1}|).
\end{align*}
Since
\begin{align*}
&\partial_{x_{-1}}|0_{k+1},1_{k+1},\cdots,N-2_{k+1},N_{k+1}|=
\sum_{i=0,i\neq N-1}^{N}|0_{k+1},1_{k+1},\cdots,i_k,\cdots,N-2_{k+1},N_{k+1}|,\\
&\partial_{x_{-1}}|1_k,1_{k+1},\cdots,N-1_{k+1}|=|1_{k-1},1_{k+1},\cdots,N-1_{k+1}|,\\
&\partial_{x_{-1}}|0_{k+1},\cdots,N-1_{k+1}|=\sum_{i=0,}^{N-1}|0_{k+1},1_{k+1},\cdots,i_k,\cdots,N-1_{k+1}|,\\
&\partial_{x_{-1}}|1_k,1_{k+1},\cdots,N-2_{k+1},N_{k+1}|=|1_{k-1},1_{k+1},\cdots,N-2_{k+1},N_{k+1}|+|1_{k},1_{k+1},\cdots,N-2_{k+1},N_{k}|,\\
&\partial_{x_{-1}}|1_{k+1},\cdots,N_{k+1}|=\sum_{i=1}^{N}|1_{k+1},\cdots,i_k,\cdots,N_{k+1}|,\\
&\partial_{x_{-1}}|0_k,0_{k+1},\cdots,N-2_{k+1}|=|0_{k-1},0_{k+1},\cdots,N-2_{k+1}|,
\end{align*}
we can verify the following Pl{\"u}cker relations for determinants
\begin{align*}
&|0_{k+1},1_{k+1},\cdots,i_k,\cdots,N-2_{k+1},N_{k+1}||1_k,1_{k+1},\cdots,N-1_{k+1}|\\
&-|0_{k+1},\cdots,i_k,\cdots,N-1_{k+1}||1_k,1_{k+1},\cdots,N-2_{k+1},N_{k+1}|\\
&=-a^{i-1}|1_k,1_{k+1},\cdots,0_{k+1},\cdots,N-2_{k+1},N_{k+1}||1_k,1_{k+1},\cdots,i_{k+1},\cdots,N-1_{k+1}|\\
&+a^{i-1}|1_k,1_{k+1},\cdots,0_{k+1},\cdots,N-1_{k+1}||1_k,1_{k+1},\cdots,i_{k+1},\cdots,N-2_{k+1},N_{k+1}|\\
&=a|1_{k+1},\cdots,i_k,\cdots,N_{k+1}||0_k,0_{k+1},\cdots,N-2_{k+1}|,i=1,\cdots,N-2\\
\\
&-|0_{k+1},1_{k+1},\cdots,N-2_{k+1},N_{k+1}||1_{k-1},1_{k+1},\cdots,N-1_{k+1}|\\
&+|0_{k+1},1_{k+1},\cdots,N-2_{k+1},N-1_{k+1}||1_{k-1},1_{k+1},\cdots,N-2_{k+1},N_{k+1}|\\
&=-|1_{k+1},\cdots,N_{k+1}||1_{k-1},0_{k+1},1_{k+1},\cdots,N-2_{k+1}|\\
&=-a\tau_{n+1,k+1}\partial_{x_{-1}}\tau_{n,k}-\tau_{n+1,k+1}\tau_{n,k}.
\end{align*}
Now we just need to verify
\begin{align*}
&(|0_{k},1_{k+1},\cdots,N-2_{k+1},N_{k+1}|+|0_{k+1},1_{k+1},\cdots,N-2_{k+1},N_{k}|)|1_k,1_{k+1},\cdots,N-1_{k+1}|\\
&-(|0_k,1_{k+1},\cdots,N-1_{k+1}|+|0_{k+1},1_{k+1},\cdots,N-1_{k}|)|1_k,1_{k+1},\cdots,N-2_{k+1},N_{k+1}|\\
&+|0_{k+1},\cdots,N-1_{k+1}||1_k,1_{k+1},\cdots,N-2_{k+1},N_{k}|\\
&-2|1_k,1_{k+1},\cdots,N-1_{k+1}||0_{k+1},1_{k+1},\cdots,N-1_{k+1}|\\
&=a(|1_{k+1},\cdots,N-2_{k+1},N-1_{k},N_{k+1}|+|1_{k+1},\cdots,N-1_{k+1},N_{k}|)|0_k,0_{k+1},\cdots,N-2_{k+1}|\\
&-|1_{k+1},\cdots,N_{k+1}||0_k,0_{k+1},\cdots,N-2_{k+1}|.
\end{align*}
which can be transformed into
\begin{align*}
&a|1_k,1_{k+1},\cdots,N-1_{k+1}||0_{k+1},1_{k+1},\cdots,N-2_{k+1},N-1_{k}|\\
&-|0_{k+1},1_{k+1},\cdots,N-1_{k}||1_k,1_{k+1},\cdots,N-2_{k+1},N_{k+1}|\\
&=a(|1_{k+1},\cdots,N-2_{k+1},N-1_{k},N_{k+1}|+|1_{k+1},\cdots,N-1_{k+1},N_{k}|)|0_k,0_{k+1},\cdots,N-2_{k+1}|
\end{align*}
by using
\begin{align*}
&|0_{k},1_{k+1},\cdots,N-2_{k+1},N_{k+1}||1_k,1_{k+1},\cdots,N-1_{k+1}|\\
&-|0_k,1_{k+1},\cdots,N-1_{k+1}||1_k,1_{k+1},\cdots,N-2_{k+1},N_{k+1}|\\
&=-|1_{k+1},\cdots,N_{k+1}||0_k,0_{k+1},1_{k+1},\cdots,N-2_{k+1}|,\\
\\
&|0_{k+1},1_{k+1},\cdots,N-2_{k+1},N_{k}||1_k,1_{k+1},\cdots,N-1_{k+1}|\\
&-|1_k,1_{k+1},\cdots,N-1_{k+1}||0_{k+1},1_{k+1},\cdots,N-1_{k+1}|\\
&=a|1_k,1_{k+1},\cdots,N-1_{k+1}||0_{k+1},1_{k+1},\cdots,N-2_{k+1},N-1_{k}|\\
\\
&|0_{k+1},\cdots,N-1_{k+1}||1_k,1_{k+1},\cdots,N-2_{k+1},N_{k}|\\
&-|1_k,1_{k+1},\cdots,N-1_{k+1}||0_{k+1},1_{k+1},\cdots,N-1_{k+1}|\\
&=a|1_k,1_{k+1},\cdots,N-2_{k+1},N-1_{k}||0_{k+1},\cdots,N-1_{k+1}|=0\,.
\end{align*}
Since
\begin{align*}
&|1_k,1_{k+1},\cdots,N-1_{k+1}||0_{k+1},1_{k+1},\cdots,N-2_{k+1},N-1_{k}|\\
&=|0_{k+1},1_{k+1},\cdots,N-1_{k+1}||1_k,1_{k+1},\cdots,N-2_{k+1},N-1_{k}|\\
&+|1_{k+1},\cdots,N-1_{k+1},N-1_{k}||1_k,0_{k+1},1_{k+1},\cdots,N-2_{k+1}|\\
&=|1_{k+1},\cdots,N-1_{k+1},N_{k}||0_k,0_{k+1},1_{k+1},\cdots,N-2_{k+1}|\\
\\
&-|0_{k+1},1_{k+1},\cdots,N-1_{k}||1_k,1_{k+1},\cdots,N-2_{k+1},N_{k+1}|\\
&=-|0_{k+1},1_{k+1},\cdots,N-2_{k+1},N_{k+1}||1_k,1_{k+1},\cdots,N-1_{k}|\\
&+|1_{k+1},\cdots,N-1_k,N_{k+1}||1_k,0_{k+1},\cdots,N-2_{k+1}|\\
&=a|1_{k+1},\cdots,N-1_k,N_{k+1}||0_k,0_{k+1},\cdots,N-2_{k+1}|,
\end{align*}
Eq. (\ref{BL2}) is verified.
\end{proof}
\section{Proof of Lemma 2.3}\label{pro-bl}
\begin{proof}
	Firstly, the bilinear equations (\ref{1})--(\ref{4}) can be recast into
	\begin{align}
	&(\ln\frac{f}{\tilde{g}})_y=\frac{1}{2\kappa}(1-\frac{\tilde{f}g}{f\tilde{g}}),\label{1'}\\
	&(\ln\frac{\tilde{f}}{g})_y=\frac{1}{2\kappa}(1-\frac{f\tilde{g}}{\tilde{f}g}),\label{2'}\\
	&(\ln f\tilde{g})_{\tau y}+(\ln\frac{f}{\tilde{g}})_\tau (\ln\frac{f}{\tilde{g}})_y-\frac{1}{2\kappa}(\ln\frac{f}{\tilde{g}})_\tau+\frac{1}{2\kappa}\frac{\tilde{f}g}{f\tilde{g}}(\ln\frac{\tilde{f}}{g})_\tau-\kappa(1-\frac{\tilde{f}g}{f\tilde{g}})=0,\label{3'}\\
	&(\ln \tilde{f}{g})_{\tau y}+(\ln\frac{\tilde{f}}{g})_\tau (\ln\frac{\tilde{f}}{g})_y-\frac{1}{2\kappa}(\ln\frac{\tilde{f}}{g})_\tau+\frac{1}{2\kappa}\frac{f\tilde{g}}{\tilde{f}g}(\ln\frac{f}{\tilde{g}})_\tau-\kappa(1-\frac{f\tilde{g}}{\tilde{f}g})=0,\label{4'}
	\end{align}
	by using
	\begin{align}
	&\frac{D_yf\cdot \tilde{g}}{f\tilde{g}}=(\ln\frac{f}{\tilde{g}})_y,\label{5}\\
	&\frac{D_\tau D_yf\cdot \tilde{g}}{f\tilde{g}}=(\ln f \tilde{g})_{\tau y}+(\ln \frac{f}{\tilde{g}})_\tau (\ln \frac{f}{\tilde{g}})_y.\label{6}
	\end{align}

	Substituting (\ref{1'}) and (\ref{2'}) into (\ref{3'}) and (\ref{4'}),  one obtains
	\begin{align}
	&(\ln f\tilde{g})_{\tau y}+\frac{1}{2\kappa}\frac{\tilde{f}g}{f\tilde{g}}(\ln\frac{\tilde{f}\tilde{g}}{fg})_\tau-\kappa(1-\frac{\tilde{f}g}{f\tilde{g}})=0,\label{3''}\\
	&(\ln \tilde{f}{g})_{\tau y}+\frac{1}{2\kappa}\frac{f\tilde{g}}{\tilde{f}g}(\ln\frac{fg}{\tilde{f}\tilde{g}})_\tau-\kappa(1-\frac{f\tilde{g}}{\tilde{f}g})=0.\label{4''}
	\end{align}
	From (\ref{1'}) and (\ref{2'}), we have
	\begin{align}
	&(\ln \frac{f\tilde{f}}{g\tilde{g}})_y=\frac{1}{2\kappa}(2-\frac{\tilde{f}g}{f\tilde{g}}-\frac{f\tilde{g}}{\tilde{f}g}),\label{a}\\
	&(\ln \frac{fg}{\tilde{f}\tilde{g}})_y=\frac{1}{2\kappa}(\frac{f\tilde{g}}{\tilde{f}g}-\frac{\tilde{f}g}{f\tilde{g}}).\label{b}
	\end{align}
	If we define a new variable by
	\begin{align}
	\phi=\mathrm{i}\ln \frac{f\tilde{g}}{\tilde{f}g},\label{phi}
	\end{align}
	and further define
	$$m = \kappa \tan \phi, \ \ r=\sqrt{m^2+\kappa^2}=\frac{\kappa}{\cos \phi},$$
	then Eqs. (\ref{a}) and (\ref{b}) become
	\begin{align}
	&(\ln \frac{f\tilde{f}}{g\tilde{g}})_y=\frac{1}{\kappa}(1-\cos \phi),\label{a'}\\
	&(\ln \frac{\tilde{f}\tilde{g}}{fg})_{y}=\frac{1}{\mathrm{i}\kappa}\sin\phi.\label{b'}
	\end{align}
	Meanwhile, we have
	\begin{align}
	\frac{\partial x}{\partial y}=\frac{1}{\kappa}+(\ln \frac{g\tilde{g}}{f\tilde{f}})_y=\frac{1}{\kappa}\cos \phi =\frac{1}{r}.\label{r}
	\end{align}
	 Differentiating Eq. (\ref{b'}) with respect to $\tau$ leads to
	\begin{align}
	u_y=\frac{1}{2\mathrm{i}\kappa}(\ln \frac{\tilde{f}\tilde{g}}{fg})_{\tau y}=-\frac{1}{2\kappa^2}(\sin \phi)_\tau,\label{eq1}
	\end{align}
	which, in turn, to be
	\begin{align}
	u_y=-\frac{1}{2\kappa^2}\phi_\tau \cos\phi .\label{tran1}
	\end{align}
	By the definition of $r$, we know that
	\begin{align*}
	-\frac{r_\tau}{r^2}
	=-\frac{1}{\kappa}\phi_\tau \sin\phi.
	\end{align*}
	In other words,
	\begin{align}
	\phi_\tau=\frac{\kappa r_\tau}{r^2\sin\phi}.\label{tran2}
	\end{align}
	Substituting (\ref{tran2}) into (\ref{tran1}), one obtains
	\begin{align}
	r_\tau+2r^2mu_{y}=0\label{eq1'}
	\end{align}
	by the definition of $m$. On the other hand, Using \eqref{3'} and \eqref{4'}, we have
	\begin{align}
	(\ln \frac{f\tilde{g}}{\tilde{f}g})_{\tau y}+\frac{1}{2\kappa}(\frac{\tilde{f}g}{f\tilde{g}}+\frac{f\tilde{g}}{\tilde{f}g})(\ln \frac{\tilde{f}\tilde{g}}{fg})_\tau-\kappa(\frac{f\tilde{g}}{\tilde{f}g}-\frac{\tilde{f}g}{f\tilde{g}})=0.
	\label{c}
	\end{align}
	From (\ref{u}), (\ref{phi}) and (\ref{c}), we obtain
	\begin{align}
		\phi_{\tau y}+2u\cos\phi-2\kappa\sin\phi=0.\label{eq2}
	\end{align}
	Then by using (\ref{eq1}) and (\ref{eq2}), we obtain
	\begin{align}
	-2u\cos^2 \phi+2\kappa\sin\phi\cos\phi=-2\kappa^2u_{yy}-2\kappa^2u_y\phi_y\tan\phi,
	\end{align}
	which is equivalent to
	\begin{align}
	u&=\kappa \tan\phi+\frac{\kappa^2}{\cos^2\phi}u_{yy}+\frac{\kappa}{\cos\phi}\kappa  \phi_y u_y \sec\phi \tan \phi\\
	&=m+r^2u_{yy}+rr_yu_y.
	\end{align}
	Note that
	\begin{align}
	u_{xx}=r(ru_y)_y=r^2u_{yy}+rr_yu_y,
	\end{align}
	so the variable $m$ is expressed as
	\begin{align}
	m=u-u_{xx}.\label{m}
	\end{align}
	Taking use of the expression (\ref{m}), we show that
	\begin{align*}
	x_{\tau y}=\left(\frac{1}{r}\right)_\tau=-\frac{\sin\phi}{\kappa}\phi_\tau=2\kappa u_y\tan\phi=2mu_y=2(u-u_{xx})u_y=2(u-(ru_y)_y)u_y=(u^2-(ru_y)^2)_y,
	\end{align*}
	which implies
	\begin{align}
	x_\tau=u^2-r^2u_y^2=u^2-u_x^2.
	\end{align}
It is consistent with the hodograph transformation (\ref{hodo}). So the mCH equation (\ref{mch}) can be recast into Eqs. (\ref{eq1'}) and (\ref{m}).
\end{proof}
\section{Proof of Lemma 3.2}\label{pro}
\begin{proof}
	Firstly we note that
	\begin{align}
	f_{n+1,k}^{(i)}(l)=f_{n,k+1}^{(i)}(l)+af_{n,k}^{(i)}(l),\,f_{n,k}^{(i)}(l+1)=f_{n,k}^{(i)}(l)-bf_{n+1,k}^{(i)}(l),\,\partial_{x_{-1}}f_{n,k}^{(i)}(l)=f_{n,k-1}^{(i)}(l)
	\end{align}
	from the expression of $f_{n,k}^{(i)}(l)$. We can obtain the following relations
	\begin{align*}
	\tau_{n, k+1}(l)&=|0_{k+1}(l),\cdots,N-1_{k+1}(l)|=|0_{k+1}(l+1),\cdots,N-2_{k+1}(l+1),N-1_{k+1}(l)|\\
	\tau_{n+1, k}(l+1)&=|1_k(l+1),1_{k+1}(l+1),\cdots,N-1_{k+1}(l+1)|\\
	\tau_{n+1, k}(l)&=|1_k(l),1_{k+1}(l),\cdots,N-1_{k+1}(l)|\\
	&=|1_k(l+1),1_{k+1}(l+1),\cdots,N-2_{k+1}(l+1),N-1_{k+1}(l)|+ab\tau_{n+1, k}(l)\\
	\tau_{n, k+1}(l+1)&=|0_{k+1}(l+1),\cdots,N-1_{k+1}(l+1)|\\
	\tau_{n,k}(l+1)&=|0_k(l+1),0_{k+1}(l+1),\cdots,N-2_{k+1}(l+1)|\\
	\tau_{n+1, k+1}(l)&= |1_{k+1}(l),\cdots,N_{k+1}(l)|=|1_{k+1}(l+1),\cdots,N-1_{k+1}(l+1),N_{k+1}(l)|
	\end{align*}
	By Pl{\"u}cker relation, we have
	\begin{align*}
	&\frac{1}{b}(\tau_{n, k+1}(l)\tau_{n+1, k}(l+1)-\tau_{n, k+1}(l+1)\tau_{n+1, k}(l))\\
	=&\frac{1}{b}|0_{k+1}(l+1),\cdots,N-2_{k+1}(l+1),N-1_{k+1}(l)||1_k(l+1),1_{k+1}(l+1),\cdots,N-1_{k+1}(l+1)|\\
	-&\frac{1}{b}|1_k(l+1),1_{k+1}(l+1),\cdots,N-2_{k+1}(l+1),N-1_{k+1}(l)||0_{k+1}(l+1),\cdots,N-1_{k+1}(l+1)|-a\tau_{n+1, k}(l)\tau_{n, k+1}(l+1)\\
	=&\frac{1}{b}|1_{k+1}(l+1),\cdots,N-1_{k+1}(l+1),N-1_{k+1}(l)||1_k(l+1),0_{k+1}(l+1),\cdots,N-2_{k+1}(l+1)|-a\tau_{n+1, k}(l)\tau_{n, k+1}(l+1)\\
	=&a[\tau_{n,k}(l+1)\tau_{n+1, k+1}(l)-\tau_{n+1, k}(l)\tau_{n, k+1}(l+1)],
	\end{align*}
	which is the semi-discrete bilinear equation (\ref{dis1}).
	
	Now we proceed to the proof of equation (\ref{dis2}). Note that
	\begin{align*}
	&D_{x_{-1}}(\frac{1}{b}(\tau_{n, k+1}(l)\cdot \tau_{n+1, k}(l+1)-\tau_{n, k+1}(l+1)\cdot \tau_{n+1, k}(l))+a\tau_{n, k+1}(l+1)\cdot \tau_{n+1, k}(l))\\
	&=\frac{1}{b}D_{x_{-1}}\tau_{n, k+1}(l)\cdot \tau_{n+1, k}(l+1)-D_{x_{-1}}(-\frac{1}{b}\tau_{n, k+1}(l+1)\cdot \tau_{n+1, k}(l)+a\tau_{n, k+1}(l+1)\cdot \tau_{n+1, k}(l))\\
	&=I_3-I_4,
	\end{align*}
	where
	\begin{align*}
	I_3&=\frac{1}{b}D_{x_{-1}}\tau_{n, k+1}(l)\cdot \tau_{n+1, k}(l+1)\\
	&=\frac{1}{b}D_{x_{-1}}|0_{k+1}(l+1),\cdots,N-2_{k+1}(l+1),N-1_{k+1}(l)|\cdot |1_k(l+1),1_{k+1}(l+1),\cdots,N-1_{k+1}(l+1)|\\
	I_4&=D_{x_{-1}}(-\frac{1}{b}\tau_{n, k+1}(l+1)\cdot \tau_{n+1, k}(l)+a\tau_{n, k+1}(l+1)\cdot \tau_{n+1, k}(l))\\
	&=-\frac{1}{b}D_{x_{-1}}|0_{k+1}(l+1),\cdots,N-1_{k+1}(l+1)|\cdot|1_k(l+1),1_{k+1}(l+1),\cdots,N-2_{k+1}(l+1),N-1_{k+1}(l)|
	\end{align*}
	One can verify that
	\begin{align*}
	\partial_{x_{-1}}|0_{k+1}(l+1),\cdots,N-2_{k+1}(l+1),N-1_{k+1}(l)|=&\sum_{i=0}^{N-2}|0_{k+1}(l+1),\cdots,i_k(l+1),\cdots,N-2_{k+1}(l+1),N-1_{k+1}(l)|\\
	&+|0_{k+1}(l+1),\cdots,N-2_{k+1}(l+1),N-1_{k}(l)|\\
	\partial_{x_{-1}}|1_k(l+1),1_{k+1}(l+1),\cdots,N-1_{k+1}(l+1)|=&|1_{k-1}(l+1),1_{k+1}(l+1),\cdots,N-1_{k+1}(l+1)|\\
	\partial_{x_{-1}}|0_{k+1}(l+1),\cdots,N-1_{k+1}(l+1)|=&\sum_{i=0}^{N-1}|0_{k+1}(l+1),\cdots,i_k(l+1),\cdots,N-1_{k+1}(l+1)|\\
	\partial_{x_{-1}}|1_k(l+1),1_{k+1}(l+1),\cdots,N-1_{k+1}(l)|=&|1_{k-1}(l+1),1_{k+1}(l+1),\cdots,N-2_{k+1}(l+1),N-1_{k+1}(l)|\\
	&+|1_{k}(l+1),1_{k+1}(l+1),\cdots,N-2_{k+1}(l+1),N-1_{k}(l)|\\
	\partial_{x_{-1}}|1_{k+1}(l+1),\cdots,N-1_{k+1}(l+1),N_{k+1}(l)|=&\sum_{i=1}^{N-1}|1_{k+1}(l+1),\cdots,i_k(l+1),\cdots,N-1_{k+1}(l+1),N_{k+1}(l)|\\
	&+|1_{k+1}(l+1),\cdots,N-1_{k+1}(l+1),N_{k}(l)|\\
	\partial_{x_{-1}}|0_k(l+1),0_{k+1}(l+1),\cdots,N-2_{k+1}(l+1)|=&|0_{k-1}(l+1),0_{k+1}(l+1),\cdots,N-2_{k+1}(l+1)|.
	\end{align*}
	From the following Pl{\"u}cker relations for determinants
	\begin{align*}
	&|0_{k+1}(l+1),\cdots,i_k(l+1),\cdots,N-2_{k+1}(l+1),N-1_{k+1}(l)||1_k(l+1),1_{k+1}(l+1),\cdots,N-1_{k+1}(l+1)|\\
	&-|0_{k+1}(l+1),\cdots,i_k(l+1),\cdots,N-1_{k+1}(l+1)||1_k(l+1),1_{k+1}(l+1),\cdots,N-2_{k+1}(l+1),N-1_{k+1}(l)|\\
	=&ab|1_{k+1}(l+1),\cdots,i_k(l+1),\cdots,N-1_{k+1}(l+1),N_{k+1}(l)||0_k(l+1),0_{k+1}(l+1),\cdots,N-2_{k+1}(l+1)|,i=1,\cdots,N-2\\
\\
	&-|0_{k+1}(l+1),\cdots,N-2_{k+1}(l+1),N-1_{k+1}(l)||1_{k-1}(l+1),1_{k+1}(l+1),\cdots,N-1_{k+1}(l+1)|\\
	&+|0_{k+1}(l+1),\cdots,N-1_{k+1}(l+1)||1_{k-1}(l+1),1_{k+1}(l+1),\cdots,N-2_{k+1}(l+1),N-1_{k+1}(l)|\\
	&=-|1_{k+1}(l+1),\cdots,N-1_{k+1}(l+1),N-1_{k+1}(l)||1_{k-1}(l+1),0_{k+1}(l+1),\cdots,N-2_{k+1}(l+1)|\\
	&=-ab|1_{k+1}(l+1),\cdots,N-1_{k+1}(l+1),N_{k+1}(l)||0_{k-1}(l+1),0_{k+1}(l+1),1_{k+1}(l+1),\cdots,N-2_{k+1}(l+1)|\\
	&-b|1_{k+1}(l+1),\cdots,N-1_{k+1}(l+1),N_{k+1}(l)||0_k(l+1),0_{k+1}(l+1),1_{k+1}(l+1),\cdots,N-2_{k+1}(l+1)|,\\
\\
	&|0_{k}(l+1),\cdots,N-2_{k+1}(l+1),N-1_{k+1}(l)||1_k(l+1),1_{k+1}(l+1),\cdots,N-1_{k+1}(l+1)|\\
	&-|0_{k}(l+1),\cdots,N-1_{k+1}(l+1)||1_{k}(l+1),1_{k+1}(l+1),\cdots,N-2_{k+1}(l+1),N-1_{k+1}(l)|\\
	&=-|1_{k+1}(l+1),\cdots,N-1_{k+1}(l+1),N-1_{k+1}(l)||0_{k}(l+1),1_{k}(l+1),\cdots,N-2_{k+1}(l+1)|\\
	&=-b|1_{k+1}(l+1),\cdots,N-1_{k+1}(l+1),N_{k+1}(l)||0_{k}(l+1),0_{k+1}(l+1),\cdots,N-2_{k+1}(l+1)|
	\end{align*}
	we just need to verify
	\begin{align*}
	&\frac{1}{b}|0_{k+1}(l+1),\cdots,N-2_{k+1}(l+1),N-1_{k}(l)||1_k(l+1),1_{k+1}(l+1),\cdots,N-1_{k+1}(l+1)|\\
	&-\frac{1}{b}|0_{k+1}(l+1),\cdots,N-1_{k}(l+1)||1_k(l+1),1_{k+1}(l+1),\cdots,N-2_{k+1}(l+1),N-1_{k+1}(l)|\\
	&+\frac{1}{b}|0_{k+1}(l+1),\cdots,N-1_{k+1}(l+1)||1_k(l+1),1_{k+1}(l+1),\cdots,N-2_{k+1}(l+1),N-1_k(l)|\\
	&-2|0_{k+1}(l+1),\cdots,N-1_{k+1}(l+1)||1_k(l),1_{k+1}(l+1),\cdots,N-2_{k+1}(l+1),N-1_{k+1}(l)|\\
	&=a|1_{k+1}(l+1),\cdots,N-1_{k}(l+1),N_{k+1}(l)||0_k(l+1),0_{k+1}(l+1),\cdots,N-2_{k+1}(l+1)|\\
	&+a|1_{k+1}(l+1),\cdots,N-1_{k+1}(l+1),N_{k}(l)||0_k(l+1),0_{k+1}(l+1),\cdots,N-2_{k+1}(l+1)|.
	\end{align*}
	Since
	\begin{align*}
	&-\frac{1}{b}|0_{k+1}(l+1),\cdots,N-1_{k}(l+1)||1_k(l+1),1_{k+1}(l+1),\cdots,N-2_{k+1}(l+1),N-1_{k+1}(l)|\\
	&=-\frac{1}{b}|0_{k+1}(l+1),\cdots,N-2_{k+1}(l+1),N-1_{k+1}(l)||1_k(l+1),1_{k+1}(l+1),\cdots,N-2_{k+1}(l+1),N-1_k(l+1)|\\
	&+\frac{1}{b}|1_{k+1}(l+1),\cdots,N-2_{k+1}(l+1),N-1_k(l+1),N-1_{k+1}(l)||1_k(l+1),0_{k+1}(l+1),\cdots, N-2_{k+1}(l+1)|\\
	&=a|1_{k+1}(l+1),\cdots,N-1_{k}(l+1),N_{k+1}(l)||0_k(l+1),0_{k+1}(l+1),\cdots,N-2_{k+1}(l+1)|\\
	&+\frac{a}{b}|1_{k+1}(l+1),\cdots,N-2_{k+1}(l+1),N-1_k(l+1),N-1_{k+1}(l+1)||0_k(l+1),0_{k+1}(l+1),\cdots, N-2_{k+1}(l+1)|\\
\\
	&\frac{1}{b}|0_{k+1}(l+1),\cdots,N-2_{k+1}(l+1),N-1_{k}(l)||1_k(l+1),1_{k+1}(l+1),\cdots,N-1_{k+1}(l+1)|\\
	&-\frac{1}{b}|0_{k+1}(l+1),\cdots,N-1_{k+1}(l+1)||1_k(l+1),1_{k+1}(l+1),\cdots,N-2_{k+1}(l+1),N-1_k(l)|\\
	&=\frac{a}{b}|1_{k+1}(l+1),\cdots,N-1_{k+1}(l+1),N-1_k(l)||0_k(l+1),0_{k+1}(l+1),\cdots,N-2_{k+1}(l+1)|\\
	&=a|1_{k+1}(l+1),\cdots,N-1_{k+1}(l+1),N_k(l)||0_k(l+1),0_{k+1}(l+1),\cdots,N-2_{k+1}(l+1)|\\
	&+\frac{a}{b}|1_{k+1}(l+1),\cdots,N-1_{k+1}(l+1),N-1_k(l+1)||0_k(l+1),0_{k+1}(l+1),\cdots,N-2_{k+1}(l+1)|,
	\end{align*}
	the equation to be verified can be transformed into
	\begin{align*}
	&\frac{2}{b}|0_{k+1}(l+1),\cdots,N-1_{k+1}(l+1)||1_k(l+1),1_{k+1}(l+1),\cdots,N-2_{k+1}(l+1),N-1_k(l)|\\
	&-2|0_{k+1}(l+1),\cdots,N-1_{k+1}(l+1)||1_k(l),1_{k+1}(l+1),\cdots,N-2_{k+1}(l+1),N-1_{k+1}(l)|\\
	&=0,
	\end{align*}
	which is equivalent to
	\begin{align*}
	\frac{1}{b}|1_k(l+1),1_{k+1}(l+1),\cdots,N-2_{k+1}(l+1),N-1_k(l)|-|1_k(l),1_{k+1}(l+1),\cdots,N-2_{k+1}(l+1),N-1_{k+1}(l)|=0.
	\end{align*}
	On the other hand, we notice that
	\begin{align*}
	&\frac{1}{b}|1_k(l+1),1_{k+1}(l+1),\cdots,N-2_{k+1}(l+1),N-1_k(l)|\\
	=&\frac{1}{b^2}|1_k(l+1),1_{k+1}(l+1),\cdots,N-2_{k+1}(l+1),N-2_k(l)|\\
	=&\cdots\\
	=&\left(\frac{1}{b}\right)^{N-2}|1_k(l),1_{k+1}(l+1),\cdots,N-2_{k+1}(l+1),2_k(l)|,\\
	\\
	&|1_k(l),1_{k+1}(l+1),\cdots,N-2_{k+1}(l+1),N-1_{k+1}(l)|\\
	=&\frac{1}{b}|1_k(l),1_{k+1}(l+1),\cdots,N-2_{k+1}(l+1),N-2_{k+1}(l)|\\
	=&\cdots\\
	=&\left(\frac{1}{b}\right)^{N-2}|1_k(l),1_{k+1}(l+1),\cdots,N-2_{k+1}(l+1),2_{k}(l)|.
	\end{align*}
	Thus we have proven that the Casorati determinant solution satisfies the bilinear equations (\ref{dis1}) and (\ref{dis2}).
\end{proof}

\end{document}